\let\ACMmaketitle=\maketitle
\renewcommand{\maketitle}{\begingroup\let\footnote=\thanks \ACMmaketitle\endgroup}
\newcommand{\M}[1]{\mathcal{M}_{#1}}
\newcommand{\C}[1]{\mathbb{C}^{#1}}
\newcommand{\env}{\operatorname{env}}
\newcommand{\out}{\operatorname{out}}
\newtheorem{theorem}{Theorem}[section]
\newtheorem*{definition*}{Definition}
\newtheorem{corollary}[theorem]{Corollary}
\newtheorem{lemma}[theorem]{Lemma}
\newtheorem{remark}[theorem]{Remark}
\newtheorem*{conjecture*}{Conjecture}
\theoremstyle{definition}
\newtheorem{example}[theorem]{Example}
\newcommand{\iden}{\mathbb{1}}
\newcommand{\PRLsep}{\noindent\makebox[\linewidth]{\resizebox{0.3333\linewidth}{1pt}{$\blacklozenge$}}\bigskip}
\begin{document}
\title{Fully undistillable quantum states are separable}
\author{Satvik Singh}
\email{satviksingh2@gmail.com}
\affiliation{Department of Applied Mathematics and Theoretical Physics, \\ University of Cambridge, Cambridge, United Kingdom}

\author{Nilanjana Datta}
\email{n.datta@damtp.cam.ac.uk}
\affiliation{Department of Applied Mathematics and Theoretical Physics, \\ University of Cambridge, Cambridge, United Kingdom}

\date{\today}

\begin{abstract}

Assume that Alice, Bob, and Charlie share a tripartite pure state $\ket{\psi_{ABC}}$. We prove that if Alice cannot distill entanglement with either Bob or Charlie using $\ket{\psi_{ABC}}$ and local operations with any one of the following configurations for classical communication: $(A\to B, A\leftrightarrow C), (A\leftrightarrow B, A\to C),$ and $(A\leftrightarrow B, A\leftrightarrow C)$, then the same is also true for the other two configurations. Moreover, this happens precisely when the state is such that both its reductions on systems $AB$ and $AC$ are separable, which is further equivalent to the reductions being PPT. This, in particular, implies that any NPT bipartite state is such that either the state itself or its complement is 2-way distillable. To prove these results, we first obtain an explicit lower bound on the 2-way distillable entanglement of low rank bipartite states. Furthermore, we show that even though not all low rank states are 1-way distillable, a randomly sampled low rank state will almost surely be 1-way distillable.
\end{abstract}

\maketitle

\section{Introduction}

Entanglement is a resource of fundamental importance in quantum information theory. Use of entanglement makes numerous information-theoretic protocols possible in the quantum realm, which are classically impossible. These include quantum teleportation~\cite{Bennett1993teleportation}, various quantum cryptographic protocols~\cite{Ekert1991crypto, Gisin2002crypto}, and superdense coding~ \cite{Bennett1992superdense}. 
Since noise and decoherence are inescapable features of our world, purification (or distillation) of noisy entanglement is of key importance in the execution of many of the aforementioned protocols.

In an entanglement distillation setup \cite{Bennett1996purification, Bennett1996distillationQECC, Horodecki2001review}, Alice and Bob initially share $n$ copies of a mixed bipartite state $\rho_{AB}$. Their aim is to obtain, via a completely positive trace-preserving map $\Lambda_n$ comprising of local operations and 1-way $(A\to B)$ or 2-way classical communication $(A\leftrightarrow B)$ (1-LOCC or 2-LOCC), a state $\Lambda_n (\rho_{AB}^{\otimes n})$ that is close to $m_n$ copies of the Bell state $\Omega_2^+:=\ketbra{\Omega_2^+}$, where $\ket{\Omega_2^+}=(\ket{00} + \ket{11})/\sqrt{2}$. If $\norm{\Lambda_n (\rho_{AB}^{\otimes n}) - \Omega_2^{+\otimes m_n}}\to 0$ as $n \to \infty$, then the asymptotic rate $\lim_{n\to \infty} m_n/n$ of ebit generation is called an achievable rate for 1-way (resp.~2-way) entanglement
distillation. 
The 1-way (resp.~2-way) distillable entanglement $D_\rightarrow(\rho_{AB})$ (resp.~$D_\leftrightarrow(\rho_{AB})$) is defined as the supremum over all
achievable rates under 1-LOCC (resp.~2-LOCC). 
The inequality $D_{\leftrightarrow}(\rho_{AB})\geq D_{\to}(\rho_{AB})$ holds trivially since every 1-LOCC operation is also a 2-LOCC operation. The inequality can also be strict \cite{Bennett1996distillationQECC}.

The `environment' in the above scheme is modelled via a third party (say, Charlie), so that the three parties together share a purification $\ket{\psi_{ABC}}$ of $\rho_{AB}$. Alice then shares the \emph{complementary} state $\sigma_{AC} :=\operatorname{Tr}_B \ketbra{\psi_{ABC}}$ (also denoted as $\rho^c_{AC}$) with Charlie \footnote{The isometric freedom in choosing a purification $\ket{\psi_{ABC}}$ of $\rho_{AB}$ imparts a similar freedom in the choice of a complementary state $\sigma_{AC}$. However, it is easy to check that different choices of $\sigma_{AC}$ yield the same value of distillable entanglement $D_{\to}(\sigma_{AC})$ or $D_{\leftrightarrow}(\sigma_{AC})$.}. In the distillation of $\rho_{AB}$, as maximal entanglement is established between Alice and Bob, Charlie gets uncorrelated from both of them due to the monogamy of entanglement \cite{Coffman2000monogamy, Osborne2006monogamy}. However, if Alice and Charlie are also allowed to communicate classically, and if the end goal is for Alice to distill ebits at the best rate possible -- regardless of whether it is done with Bob or Charlie -- then Alice can either choose $\rho_{AB}$ or $\sigma_{AC}$ in the distillation scheme, depending on which state yields the best rate. This leads to the definition of the following \emph{maximal} distillation rates, one for each configuration of communication links between the three parties (Fig~\ref{fig:my_label}):
\begin{align}\label{eq:maxrates}
    D^{\leftrightarrow}_ {\leftrightarrow}(\psi_{ABC}) &:= \max \{D_{\leftrightarrow}(\sigma_{AC}), D_{\leftrightarrow}(\rho_{AB}) \}, \nonumber \\ 
    D^{\to}_{\leftrightarrow}(\psi_{ABC}) &:= \max \{D_{\to}(\sigma_{AC}), D_{\leftrightarrow}(\rho_{AB}) \}, \nonumber \\
    D^{\leftrightarrow}_{\to}(\psi_{ABC}) &:= \max \{D_{\leftrightarrow}(\sigma_{AC}), D_{\to}(\rho_{AB})\}, \nonumber \\
    D^{\to}_{\to}(\psi_{ABC}) &:= \max \{D_{\to}(\sigma_{AC}), D_{\to}(\rho_{AB}) \}.
\end{align}
Note that the above quantities do not depend on the particular choice of the purification $\ket{\psi_{ABC}}$. 
The state $\rho_{AB}$ (or equivalently the pure state $\psi_{ABC}= \ketbra{\psi_{ABC}}$) is said to be {\em{2-way fully undistillable}} if $D^{\leftrightarrow}_ {\leftrightarrow}(\psi_{ABC})=0$. Similarly, it is said to be (1,2)-way (resp.~(2,1)-way) fully undistillable if $D^{\to}_{\leftrightarrow}(\psi_{ABC})$ (resp.~$D_{\to}^{\leftrightarrow}(\psi_{ABC})$) is zero. 

It is worth emphasizing that in the above setting, once Alice chooses the party with whom she wishes to distill entanglement (say Bob), Alice and Bob can only communicate with each other and not with Charlie. Here is another way to look at this. We can think of Alice as a double agent who shares a tripartite state with Bob and Charlie, each of whom thinks that Alice is going to execute the distillation protocol with them. However, Alice is greedy and only wants to maximise her ebit output (irrespective of who it is shared with). She chooses either the reduced state with Bob or the one with Charlie, depending on which gives her the better distillation rate. Upon doing so, she cuts off all classical communication with the third party. This is what makes our scenario different from the environment-assisted distillation setups (see \cite{Smolin2005assistance} and references therein), where Charlie acts as a helper and is allowed to perform measurements and classically communicate with Alice and Bob to boost the rate at which they can distill entanglement among themselves.

It turns out that distilling entanglement is intimately connected with the task of reliably transmitting quantum information between two parties. If the quantum communication link from Alice to Bob is modelled by a quantum channel $\Phi:\M{d_A}\to \M{d_B}$ (here, $\M{d}$ denotes the set of all $d\times d$ complex matrices), the 1-way (resp. 2-way) \emph{quantum capacity} $\mathcal{Q}_{\to}(\Phi)$ (resp. $\mathcal{Q}_{\leftrightarrow}(\Phi)$) of $\Phi$ is the maximum rate at which Alice can reliably send quantum information to Bob by using asymptotically many copies of $\Phi$ and 1-way $(A\to B)$ (resp. 2-way) classical communication. If no classical communication between Alice and Bob is allowed, the corresponding quantity $\mathcal{Q}(\Phi)$ is simply called the \emph{quantum capacity} of $\Phi$. The bounds $\mathcal{Q}_{\leftrightarrow}(\Phi)\geq \mathcal{Q}_{\to}(\Phi)\geq \mathcal{Q}(\Phi)$ hold trivially. Surprisingly, 1-way communication does not help in transmitting quantum information, i.e., $\mathcal{Q}(\Phi)=\mathcal{Q}_{\to}(\Phi)$ \cite{Bennett1996distillationQECC}. However, 2-way communication can increase the quantum capacity of some channels, i.e., $\mathcal{Q}_{\leftrightarrow}(\Phi)>\mathcal{Q}_{\to}(\Phi)$ for these channels~\cite{Bennett1996distillationQECC}. In order to establish the link between quantum information transmission and entanglement distillation, we first define the \emph{Choi state}
\begin{equation}
J_{AB}(\Phi) := (\operatorname{id} \otimes \Phi)(\Omega^+_{d_A}),
\end{equation}
where $\ket{\Omega^+_{d_A}}:=(1/\sqrt{d_A})\sum_i\ket{ii}\in \C{d_A}\otimes \C{d_A}$ is maximally entangled and $\Omega_{d_A}^+ := \ketbra{\Omega^+_{d_A}}$.
The idea \cite{Bennett1996distillationQECC} is that Alice can locally prepare multiple copies of $\Omega^+_{d_A}$ and send one part of each copy to Bob via the channel $\Phi$, so that at the end they share multiple copies of the state $J_{AB}(\Phi)$ among themselves. Then, if Alice and Bob can distill ebits from $J_{AB}(\Phi)$ at an asymptotic rate $R$, they can use these ebits in the standard teleportation protocol to send quantum information from $A\to B$ at the same rate. This gives us the bounds:
\begin{align}\label{eq:cap-distill-bound1}
    \mathcal{Q}_{\leftrightarrow}(\Phi) \geq D_{\leftrightarrow} (J_{AB}(\Phi))\,;\, 
    \mathcal{Q}_{\to}(\Phi) \geq D_{\to} (J_{AB}(\Phi)).
\end{align}
On the other hand, if Alice and Bob initially share multiple copies of the Choi state
$J_{AB}(\Phi)$, they can implement the channel $\Phi$ with probability $1/d^2_A$ by using $J_{AB}(\Phi)$ in the standard teleportation protocol \cite{Bennett1996distillationQECC}, \cite[Section 2.1]{ Wolf2012Qtour}.
This fact can be used to obtain the following bound:
\begin{align}\label{eq:cap-distill-bound2}
\frac{1}{d^2_A} \mathcal{Q}^{(1)}(\Phi) \leq D_{\to} (J_{AB}(\Phi)),
\end{align}
where $\mathcal{Q}^{(1)}(\Phi)$ is the \emph{coherent information} of $\Phi$ and its regularization yields the 1-way quantum capacity (see Appendix~\ref{appen:capbound} for more details).

Let us now briefly describe the primary contribution of our work. Recall that a state $\rho_{AB}$ is called
\begin{enumerate}
    \item \emph{separable} if it lies in the convex hull of product states, i.e., states of the form $\sigma_A \otimes \gamma_B$;
    \item PPT if it has positive partial transpose;
    \item \emph{2-way undistillable} if $D_{\leftrightarrow }(\rho_{AB})=0$;
    \item \emph{1-way undistillable} if $D_{\to}(\rho_{AB})=0$.
\end{enumerate} 

\medskip

The implications $1\implies 2\implies 3\implies 4$ are well-known. However, $4\centernot\implies 3$ \cite{Bennett1996distillationQECC} and $2\centernot\implies 1$ \cite{Horodecki1996sep}. Deciding whether or not every non-PPT (NPT) state is 2-way distillable consitutes the fundamental NPT bound entanglement problem \cite{Shor2000evidence, Pankowski2010nptbound}. In this paper, we resolve an interesting variant of this problem. We prove that a state is \emph{2-way fully undistillable}, i.e., $D^{\leftrightarrow}_{\leftrightarrow}(\psi_{ABC})=0$, if and only if both $\rho_{AB}$ and its complement $\sigma_{AC}$ are separable, which is further equivalent to both of them being PPT. Clearly, 2-way full undistillability implies (2,1)- and (1,2)-way full undistillability:
\begin{align}
    \hspace{-10pt}D^{\leftrightarrow}_{\leftrightarrow}(\psi_{ABC})=0 \implies 0 &= D^{\leftrightarrow}_{\to}(\psi_{ABC}) = D^{\to}_{\leftrightarrow}(\psi_{ABC}).
\end{align}
However, it might be the case that $D^{\leftrightarrow}_{\to}(\psi_{ABC})=0$, but allowing backward classical communication from Bob to Alice makes distillation possible so that $D^{\leftrightarrow}_{\leftrightarrow}(\psi_{ABC})>0$. Surprisingly, we show that this is impossible
because of the following equivalences (Corollary~\ref{coro:main1}): 
\begin{align}
    D^{\leftrightarrow}_{\leftrightarrow}(\psi_{ABC})=0 &\iff D^{\leftrightarrow}_{\to}(\psi_{ABC})=0 \nonumber \\ &\iff D^{\to}_{\leftrightarrow}(\psi_{ABC})=0,
\end{align}
This uselessness of backward classical communication in the stated entanglement distillation setup is quite counter-intuitive and is the crux of our paper. We also prove that the above equivalences break down for 1-way full undistillability, i.e., there exist states ${\psi_{ABC}}$ such that $D^{\to}_{\to}(\psi_{ABC})=0$ but $D^{\leftrightarrow}_{\leftrightarrow}(\psi_{ABC})>0$ (Corollary~\ref{coro:main3}). 

\onecolumngrid
\medskip

\begin{figure}[H] \label{fig:I}
    \centering
    \includegraphics[scale=1.2]{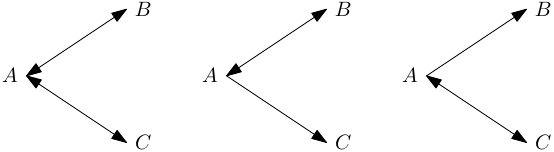}
    \caption{Alice, Bob, and Charlie share a tripartite pure state $\ket{\psi_{ABC}}$ and can classically communicate in any one of the three configurations shown above. The task is for Alice to distill maximally entangled states with either Bob or Charlie. Our main result shows that if any one of these configurations is useless for the stated task, then so are the other two. Moreover, this is the case precisely when $\ket{\psi_{ABC}}$ is such that the reduced states $\rho_{AB}$ and $\sigma_{AC}$ are both separable.}
    \label{fig:my_label}
\end{figure}

\twocolumngrid

In order to prove the above results, we exploit the distillability properties of {\em{low rank}} quantum states, i.e., states $\rho_{AB}$ satisfying $\rank\rho_{AB}< \rank \rho_B$. It is known that such states are 2-way distillable \cite[Theorem 1]{Horodecki2003entanglement}. We first improve this result by deriving an explict lower bound on the 2-way distillable entanglement of such states (Theorem~\ref{theorem:main1}). Furthermore, we show that not all low rank states are 1-way distillable (Example~\ref{eg:distill}). However, it turns out that a simple additional rank constraint on such states makes distillation possible by using only 1-way classical communication (Theorem~\ref{theorem:main2}). Using this result, we prove that a randomly selected low rank state is \emph{almost surely} 1-way distillable (Theorem~\ref{theorem:main3}). Finally, we use Theorem~\ref{theorem:main2} along with some previously known results to derive our main results on full undistillability of quantum states (Theorems~\ref{theorem:main4} and \ref{theorem:main5}, and Corollary~\ref{coro:main1}).

\section{Results}

As mentioned before, any low rank quantum state is known to be 2-way distillable \cite[Theorem 1]{Horodecki2003entanglement}. The following theorem strengthens this result by providing a concrete lower bound on the 2-way distillable entanglement of such states. Our proof exploits the \emph{hashing bound} \cite{Bennett1996distillationQECC, Devetak2005distillation}, which shows that for any state $\rho_{AB}$, the \emph{coherent information} $I^c_{\to}(\rho_{AB}):= S(\rho_B) - S(\rho_{AB})$ is a 1-way achievable rate for distilling entanglement from $\rho_{AB}$ via the \emph{hashing protocol}, where $S(\rho):=-\Tr (\rho \log \rho)$ is the \emph{von Neumann entropy} function (all logarithms are taken with base 2).

\begin{theorem}\label{theorem:main1}
Let $\rho_{AB}$ be a quantum state satisfying $r = \rank \rho_{AB} < \rank \rho_B = r_B$. Then,
\begin{equation*}
    D_{\leftrightarrow}(\rho_{AB}) \geq \lambda^B_{\min} r_B [\log r_B - \log r]>0,
\end{equation*}
where $\lambda^B_{min}$ is the minimum positive eigenvalue of $\rho_B$. In particular, $\rho_{AB}$ is 2-way distillable. Similarly, if $r<\rank \rho_A = r_A$, then
\begin{equation*}
    D_{\leftrightarrow}(\rho_{AB}) \geq \lambda^A_{\min} r_A [\log r_A - \log r]>0.
\end{equation*}
\end{theorem}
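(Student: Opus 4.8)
The plan is to reduce the statement to the hashing bound by means of a single local filtering step applied before the asymptotic distillation. I would first invoke the standard fact that if a probabilistic local filter succeeds with probability $p>0$ and maps $\rho_{AB}$ to the post-selected state $\sigma$, then $D_\leftrightarrow(\rho_{AB}) \ge p\, D_\leftrightarrow(\sigma)$: one applies the filter independently to each of $n$ copies, keeps only the $\approx pn$ copies on which it succeeds (the party performing the filter announces which these are over a classical channel), and runs a $2$-LOCC distillation protocol on the surviving copies, which are in the state $\sigma^{\otimes(\approx pn)}$. Combined with the hashing bound applied in either direction, this yields, for every such filter,
\begin{equation*}
    D_\leftrightarrow(\rho_{AB}) \;\ge\; p\,\bigl[\max\{S(\sigma_A),S(\sigma_B)\} - S(\sigma_{AB})\bigr].
\end{equation*}

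It then remains to choose a filter that makes the right-hand side as large as the claimed bound. For the first inequality, let $\Pi_B$ be the orthogonal projector onto $\operatorname{supp}\rho_B$ and let Bob's filter be $F_B := \sqrt{\lambda^B_{\min}}\,\rho_B^{-1/2}$, where $\rho_B^{-1/2}$ denotes the Moore--Penrose pseudoinverse. Since $F_B^\dagger F_B = \lambda^B_{\min}\,\rho_B^{-1} \le \Pi_B \le \iden_B$, this is a legitimate filter (it extends to a two-outcome local measurement on $B$). I would then compute its success probability $p = \Tr[(\iden_A\otimes F_B^\dagger F_B)\rho_{AB}] = \Tr[F_B^\dagger F_B\,\rho_B] = \lambda^B_{\min}\Tr\Pi_B = \lambda^B_{\min}\, r_B$, and the post-selected $B$-marginal $\sigma_B = p^{-1}F_B\rho_B F_B^\dagger = \Pi_B/r_B$, so that $S(\sigma_B) = \log r_B$. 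Finally, because $\sigma_{AB}$ is obtained from $\rho_{AB}$ by multiplication with a local operator, $\rank\sigma_{AB} \le \rank\rho_{AB} = r$, hence $S(\sigma_{AB}) \le \log r$; substituting into the displayed inequality gives $D_\leftrightarrow(\rho_{AB}) \ge \lambda^B_{\min} r_B[\log r_B - \log r]$, which is strictly positive since $r<r_B$, so $\rho_{AB}$ is $2$-way distillable. The $A$-version is entirely symmetric: use instead the filter $F_A := \sqrt{\lambda^A_{\min}}\,\rho_A^{-1/2}$, which succeeds with probability $\lambda^A_{\min} r_A$ and produces a state with $\sigma_A = \Pi_A/r_A$ and $\rank\sigma_{AB}\le r$, and run the hashing protocol in the $B\to A$ direction.

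I expect the only genuinely delicate point to be the asymptotic bookkeeping behind $D_\leftrightarrow(\rho_{AB}) \ge p\,D_\leftrightarrow(\sigma)$: one must check that the number of successful filterings concentrates sharply enough that the overall distillation rate is at least $p(R-\varepsilon)$ for every achievable rate $R < D_\leftrightarrow(\sigma)$ and every $\varepsilon>0$ (copies on which the filter fails, and the rare event of an atypically small number of successes, are simply discarded). This is a routine large-deviations estimate; once it is in place, everything else in the argument is elementary linear algebra together with the hashing bound recalled before the theorem.
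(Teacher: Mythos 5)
Your proposal is correct and follows essentially the same route as the paper's own proof in Appendix~\ref{appen:th-main1}: the identical filter $\sqrt{\lambda^B_{\min}}\,\rho_B^{-1/2}$ applied by Bob (resp.\ Alice), the same success probability $\lambda^B_{\min}r_B$, the same flattened marginal $\Pi_B/r_B$, and the hashing bound combined with $\rank\sigma_{AB}\le r$ to lower-bound the coherent information by $\log r_B-\log r$. The only additions are your explicit verification that the filter is a valid measurement element and the remark about the concentration argument behind $D_\leftrightarrow(\rho_{AB})\ge p\,D_\leftrightarrow(\sigma)$, both of which the paper leaves implicit.
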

\begin{proof}
We prove only the first part. The idea is for Bob to locally apply a filter (this is just a local measurement on Bob's subsystem) which succeeds with some non-zero probability in such a way that the filtered state $\rho'_{AB}$ satisfies $\rho'_B = \Pi_B/r_B$, where $\Pi_B$ projects onto the support of $\rho_B$ (note that Bob has to convey the filtering results to Alice via backward classical communication, so that they can discard the states for which the filtering fails). The two parties can then distill entanglement from $\rho'_{AB}$ at the desired rate via the hashing protocol. Further details of the proof can be found in Appendix~\ref{appen:th-main1}.
\end{proof}

Note that the strategy used to prove Theorem~\ref{theorem:main1} would fail if classical communication is only allowed from Alice to Bob, since then Bob would not be able to convey the results of the filtering procedure to Alice. In fact, one can explicitly construct a state which has low rank but is still 1-way undistillable, as we now do.

\begin{example}\label{eg:distill}
We use the construction from \cite{Cubitt2008degradable}. Let $\Phi:\M{d_A}\to \M{d_A}\oplus \M{d_{A'}}\simeq \M{d_B}$ be defined as $\Phi(\rho) = (\rho \oplus \Lambda(\rho))/2$, where $d_A = d_{A'}$ and $\Lambda:\M{d_A}\to \M{d_{A'}}$ is a channel with $\rank J_{AA'}(\Lambda)=d_A^2$. Then, $\rank J_{AB}(\Phi)=d_A^2+1$. Moreover, any channel $\Phi^c:\M{d_A}\to \M{d_C}$ that is complementary to $\Phi$ is \emph{antidegradable}, i.e., there exists a channel $\mathcal{N}:\M{d_B}\to \M{d_C}$ such that $\Phi^c = \mathcal{N}\circ \Phi$. Such channels are known to have zero 1-way quantum capacity $\mathcal{Q}_{\to}(\Phi^c)=0$ \cite{Smith2012incapacity}. The Choi state $J_{AC}(\Phi^c)$ then serves as our desired example, since
\begin{align*}
    \operatorname{rank}J_{AC}(\Phi^c) &\leq 2d_A < d_A^2+1 = \operatorname{rank} J_C(\Phi^c),
\end{align*}
but $D_{\to}(J_{AC}(\Phi^c)) \leq \mathcal{Q}_{\to}(\Phi^c)=0$. On the other hand, Theorem~\ref{theorem:main1} ensures that
\begin{align*}
    \mathcal{Q}_{\leftrightarrow}(\Phi^c) &\geq D_{\leftrightarrow}(J_{AC}(\Phi^c)) \\ &\geq \lambda^C_{\min} (d^2_A+1) [\log (d^2_A+1) - \log (2d_A)],
\end{align*}
where $\lambda^C_{min}$ is the minimum positive eigenvalue of $J_C(\Phi^c)$.
\end{example}

We have seen that although low rank states are 2-way distillable, they may not always be 1-way distillable.
However, a simple additional constraint on these low rank states makes distillation of entanglement possible with only 1-way classical communication. Before stating this result, let us note that for any (normalized) $\ket{\phi_A}\in \C{d_A}$,
\begin{equation}\label{eq:rankbound}
    \operatorname{rank}\rho^\phi_B \leq \min \{\operatorname{rank}\rho_{AB}, \operatorname{rank}\rho_B\},
\end{equation}
where $\rho^{\phi}_B := \operatorname{Tr}_A [(\ketbra{\phi_A}\otimes \iden_B)\rho_{AB}].$ The proof of this claim can be found in Appendix~\ref{appen:eq-rankbound} \footnote{In \cite{Chen2011entanglement}, a state $\rho_{AB}$ is said to have the \emph{right full-rank property} if there exists $\ket{\phi_A}$ which saturates the bound in Eq.~\eqref{eq:rankbound}}. 

\begin{theorem}\label{theorem:main2}
Let $\rho_{AB}$ be a quantum state satisfying $r = \rank \rho_{AB} < \rank \rho_B = r_B$. If there exists a $\ket{\phi_A}\in \C{d_A}$ (normalized) with $\operatorname{rank}\rho^{\phi}_B = \min \{r,r_B \}=r$, then $\rho_{AB}$ is 1-way distillable, i.e., $D_{\to}(\rho_{AB})>0$.
\end{theorem}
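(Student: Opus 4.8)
The plan is to exploit the coherent-information lower bound (the hashing bound, $D_{\to}(\rho_{AB})\geq I^c_{\to}(\rho_{AB}) = S(\rho_B)-S(\rho_{AB})$) combined with a suitable \emph{1-way local filtering} by Alice. Since only $A\to B$ communication is permitted, Alice is the party who must do the filtering: she applies a measurement, announces the (successful) outcome to Bob, and they keep only those copies. The key observation is that applying a rank-one filter $\ketbra{\phi_A}$ on Alice's side (followed by renormalization) produces, with nonzero probability, a new state whose $B$-marginal is exactly $\rho^\phi_B/\Tr\rho^\phi_B$, supported on a space of dimension $r = \rank\rho^\phi_B$ (by hypothesis), while the global filtered state has rank at most $r$ as well. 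So after filtering we have a state $\rho'_{AB}$ with $\rank\rho'_B \leq r$ and $\rank\rho'_{AB}\leq r$; if we can arrange $\rank\rho'_B$ to \emph{equal} $r$ and $\rho'_B$ to be close to maximally mixed on its support, the hashing bound gives a strictly positive rate.

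The steps, in order. First I would make precise the notion of 1-way local filtering: Alice applies a generalized measurement with one distinguished Kraus operator $F = \ketbra{\phi_A}$ (completed to a valid instrument), obtains the associated outcome with probability $p = \Tr[(\ketbra{\phi_A}\otimes\iden_B)\rho_{AB}] > 0$ (nonzero because $\ket{\phi_A}$ is not orthogonal to the support of $\rho_A$, which follows from $\rank\rho^\phi_B = r \geq 1$), and tells Bob to keep that copy. The resulting normalized state is $\rho'_{AB} = \tfrac1p (\ketbra{\phi_A}\otimes\iden_B)\rho_{AB}(\ketbra{\phi_A}\otimes\iden_B)$, which is supported on $\ket{\phi_A}\otimes\C{d_B}$, hence effectively a state on $B$ alone, namely $\rho'_B = \rho^\phi_B/p$. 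By hypothesis $\rank\rho'_B = r$. But a state on $B$ alone has trivial correlations — so this naive filter alone distills nothing. The fix, and this is the heart of the argument, is to filter only \emph{partially}: Alice applies $F_\epsilon = \sqrt{\iden_A - \epsilon\,\Pi^\perp}$ or, better, she interpolates between the identity and $\ketbra{\phi_A}$. Concretely, take $F_t = (1-t)\iden_A + t\,\ketbra{\phi_A}$ (or $\sqrt{(1-t^2)\iden + t^2\ketbra{\phi_A}}$ to keep it a genuine filter), rescaled, and let $\rho_{AB}(t)$ be the post-filter state. At $t=0$ we recover $\rho_{AB}$; as $t\to 1$ we approach the rank-$r$ product-like state above. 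I would then show that $\rank\rho_B(t) = \rank\rho_B = r_B$ for all $t<1$ but $\rank\rho_B(t)\to r$ effectively in the sense that $\rho_B(t)$ develops $r_B - r$ eigenvalues that vanish as $t\to 1$, while $S(\rho_{AB}(t))$ stays bounded (since $\rank\rho_{AB}(t)\leq r$ throughout). Therefore $I^c_{\to}(\rho_{AB}(t)) = S(\rho_B(t)) - S(\rho_{AB}(t)) \to \log r - (\text{something} \leq \log r)$...

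Actually the cleaner route: since $D_\to$ is monotone under \emph{Alice's} local operations with $A\to B$ communication, and filtering is such an operation, $D_\to(\rho_{AB}) \geq p(t)\, D_\to(\rho_{AB}(t)) \geq p(t)\, I^c_\to(\rho_{AB}(t))$ for every $t$. So it suffices to find one $t$ with $I^c_\to(\rho_{AB}(t)) > 0$. Now $\rank\rho_{AB}(t)\leq r$ for $t$ in a neighborhood of $1$ is false — at $t<1$ the filter is invertible, so $\rank\rho_{AB}(t) = \rank\rho_{AB} = r$. Good: so $S(\rho_{AB}(t))\leq \log r$ with equality only if $\rho_{AB}(t)$ is maximally mixed on its $r$-dimensional support. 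Meanwhile $\rho_B(1) = \rho^\phi_B/p$ has rank exactly $r$, and $\rho_B(t)\to\rho_B(1)$ continuously, so for $t$ near $1$, $\rho_B(t)$ is within $o(1)$ of a rank-$r$ state; by continuity and the fact that $r < r_B$, we have $S(\rho_B(t))\to S(\rho_B(1)) \leq \log r$ as well — this is too weak. The genuinely needed input is: $I^c_\to(\rho_{AB}(1)) = S(\rho_B(1)) - S(\rho_{AB}(1))$, where now $\rho_{AB}(1)$ is the rank-$r$ filtered state living on $\ket{\phi_A}\otimes\C{d_B}$, so $S(\rho_{AB}(1)) = S(\rho_B(1))$ and the coherent information is exactly $0$ at $t=1$ — consistent with the product structure. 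So I must show the coherent information is positive for some $t<1$, i.e. that $\tfrac{d}{dt} I^c_\to(\rho_{AB}(t))\big|_{t=1^-} < 0$, equivalently $I^c_\to$ increases as we back off from the full filter. This derivative computation — showing that relaxing the rank-one filter strictly raises $S(\rho_B) - S(\rho_{AB})$ above $0$, using $r < r_B$ — is the main obstacle, and I expect it to hinge on a perturbative expansion of both entropies around $t=1$ where the $(r_B - r)$ small eigenvalues of $\rho_B(t)$ enter at first order while $\rho_{AB}(t)$'s spectrum perturbs only at the same rank $r$; the entropy of the $B$-marginal then gains a term like $-(r_B-r)\,\delta\log\delta > 0$ that the $AB$ entropy cannot match. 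Alternatively, and this may be the route the authors take, one invokes Theorem~\ref{theorem:main1}'s filtering idea but pushes the filter onto Alice and uses that the resulting state, having $\rank\rho'_{AB} = r < r_B = \rank\rho'_B$ preserved with $\rho'_B$ now arranged (by the $\ket{\phi_A}$ hypothesis together with an averaging/twirling over the stabilizer of the support) to be proportional to a projector, has strictly positive hashing rate $r_B(\log r_B - \log r) > 0$ outright. I would close by remarking that the condition $\rank\rho^\phi_B = r$ is exactly what guarantees the filter does not collapse the $B$-support prematurely, which is why it cannot be dropped (cf. Example~\ref{eg:distill}).
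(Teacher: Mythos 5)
Your overall strategy is viable, but the proof as written has a genuine gap at the decisive step, which you yourself flag as ``the main obstacle'': you never actually establish that $I^c_{\to}(\rho_{AB}(t))>0$ for some $t<1$, you only predict that a perturbative expansion will deliver it. For the record, that expansion does work, and the reason is exactly the hypothesis $\rank\rho^\phi_B=r$: writing $\delta=(1-t)^2$ and $P=\ketbra{\phi_A}$, the cross terms in $(P+\sqrt{\delta}(\iden-P))\otimes\iden_B$ conjugating $\rho_{AB}$ vanish under $\Tr_A$, so $\rho_B(t)\propto\rho^\phi_B+\delta(\rho_B-\rho^\phi_B)$ acquires $r_B-r$ eigenvalues of order $\delta$, giving $S(\rho_B(t))=S(\rho_B(1))+c\,\delta\log(1/\delta)+O(\delta)$ with $c>0$; meanwhile $\rho_{AB}(t)$ has rank exactly $r$ for all $t\le 1$ \emph{including} the endpoint (this is where $\rank\rho^\phi_B=r$ enters --- it prevents the $AB$ rank from dropping at $t=1$), its nonzero eigenvalues stay bounded away from zero, and so $S(\rho_{AB}(t))=S(\rho_{AB}(1))+O(\delta)$ with no log singularity. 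Since the two entropies coincide at $t=1$, the coherent information is $c\,\delta\log(1/\delta)+O(\delta)>0$ for small $\delta$. This is precisely the ``log-singularity'' argument of Refs.~\cite{Siddhu2021logsingularity,Singh2022detecting}; the paper does not redo it but instead has Alice filter with $Y_A=\sqrt{\lambda^A_{\min}}\,\rho_A^{-1/2}$ to flatten her marginal, interprets the filtered state as the Choi state of a channel $\Phi_{\rho'}$ implementable by teleportation, and invokes \cite[Corollary II.8]{Singh2022detecting} together with Eq.~\eqref{eq:cap-distill-bound2} to conclude $D_\to(\rho_{AB})\ge (p_{\rm succ}/r_A^2)\,\mathcal{Q}_\to(\Phi_{\rho'})>0$. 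So your route, once completed, is a legitimate state-level shortcut that avoids the teleportation detour, but as submitted it asserts rather than proves the key inequality.

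Two further corrections. First, the ``alternative'' you float at the end --- that Alice-side filtering yields a state with hashing rate $r_B(\log r_B-\log r)>0$ ``outright'' --- is false: Alice's filter flattens $\rho_A$, not $\rho_B$, so there is no reason for $S(\rho'_B)-S(\rho'_{AB})$ to be positive, and indeed Example~\ref{eg:distill} exhibits a low-rank state that is 1-way undistillable, which rules out any 1-way protocol achieving the 2-way rate of Theorem~\ref{theorem:main1} in general. Second, your closing gloss that $\rank\rho^\phi_B=r$ ``guarantees the filter does not collapse the $B$-support prematurely'' misidentifies its role: the condition controls the $AB$ rank at the endpoint of the filter family (killing the competing log singularity in $S(\rho_{AB}(t))$), not the $B$-support.
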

\begin{proof}
Instead of Bob, it is now Alice who performs the filtering procedure in exactly the same way as before. The two parties finally share the filtered state $\rho'_{AB}$ satisfying $\rho'_{A}=\Pi_A/r_A$, which they use to probabilistically implement a channel $\Phi_{\rho'}$ from $A\to B$. This channel can be shown to have positive quantum capacity using recently obtained results \cite{Siddhu2021logsingularity, Singh2022detecting}, which when combined with Eq.~\eqref{eq:cap-distill-bound2} yields the stated result. The details of the proof are presented in Appendix~\ref{appen:th-main2}.
\end{proof}

The above theorem is the primary technical contribution of our work. The filtering method described above along with the bound in Eq.~\eqref{eq:cap-distill-bound2} allow us to exploit the recently developed perturbative tools for detecting positive quantum capacities of quantum channels \cite{Siddhu2021logsingularity, Singh2022detecting} to study distillability of low rank quantum states.

In general, for a state $\rho_{AB}$, it is difficult to ascertain the existence of $\ket{\phi_A}\in \C{d_A}$ such that $\rho^\phi_B$ saturates the bound in Eq.~\eqref{eq:rankbound}. However, we can show that for a randomly sampled low rank state, such a $\ket{\phi_A}$ exists almost surely, which means that such a state is almost surely 1-way distillable. In order to state this result more precisely, we first outline the steps to randomly sample a state \cite{Zyczkowski2011random}.
\begin{itemize}
    \item Fix dimensions $d_A,d_B\in \mathbb{N}, (d_A,d_B>1)$. 
    \item For each $d_C\in \mathbb{N}$, let $\ket{\psi_{ABC}}$ be a pure state sampled according to the Haar measure on the unit sphere in $\C{d_A}\otimes \C{d_B}\otimes \C{d_C}$ (see \cite{Zyczkowski2011random}).
    \item Then, $\rho_{AB}=\operatorname{Tr}_C\ketbra{\psi_{ABC}}$ is our randomly selected state (denoted as $\rho_{AB}\sim \mu_{d_C}$).
\end{itemize}

\begin{theorem}\label{theorem:main3}
Let $\rho_{AB}\sim \mu_{d_C}$ with $d_C < d_B$ be a random low rank state. Then, $\rho_{AB}$ is almost surely 1-way distillable, i.e., $D_{\to}(\rho_{AB})>0$ almost surely.
\end{theorem}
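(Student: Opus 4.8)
The plan is to reduce Theorem~\ref{theorem:main3} to Theorem~\ref{theorem:main2} by showing that, for a Haar-random $\ket{\psi_{ABE}}$ with $d_E < d_B$, the reduced state $\rho_{AB}$ almost surely satisfies both hypotheses of that theorem: namely, that $r := \rank \rho_{AB} < \rank \rho_B =: r_B$, and that there exists $\ket{\phi_A}$ with $\rank \rho^\phi_B = \min\{r, r_B\} = r$. The first condition is the easy part: for generic pure $\ket{\psi_{ABE}}$, the reduced states have full rank on their respective spaces, so $\rank\rho_{AB} = \min\{d_A d_B, d_E\} = d_E$ (using $d_E < d_B \le d_A d_B$), while $\rank \rho_B = \min\{d_B, d_A d_E\} = d_B > d_E$ (assuming also $d_A d_E \ge d_B$, which holds since $d_A > 1$; the borderline case $d_A d_E < d_B$ should be checked separately but still gives $\rank\rho_B = d_A d_E > d_E = \rank \rho_{AB}$ when $d_A>1$). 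The set of pure states failing full-rankness of a reduced density matrix is the zero set of a nontrivial polynomial (a minor/determinant), hence has Haar measure zero.

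The substance is the second condition. Here I would make the following observation: $\rho^\phi_B = \Tr_A[(\ketbra{\phi_A}\otimes \iden_B)\rho_{AB}]$ can be written in terms of the purification as $\langle \phi_A | \rho_{ABE} | \phi_A \rangle$ traced over $E$, and more usefully, if $\ket{\psi_{ABE}} = \sum_{ijk} c_{ijk}\ket{ijk}$, then projecting Alice onto $\ket{\phi_A}$ produces an (unnormalized) bipartite vector $\ket{\theta_{BE}} = \sum_{jk}(\sum_i \overline{(\phi_A)_i} c_{ijk})\ket{jk}$ on $\C{d_B}\otimes\C{d_E}$, and $\rho^\phi_B$ is proportional to $\Tr_E \ketbra{\theta_{BE}}$. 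Thus $\rank \rho^\phi_B = \rank M_\phi$, where $M_\phi$ is the $d_B \times d_E$ matrix with entries $(M_\phi)_{jk} = \sum_i \overline{(\phi_A)_i}\, c_{ijk}$. We want this rank to equal $r = d_E$, i.e., $M_\phi$ should have full column rank $d_E$ — equivalently, some $d_E\times d_E$ minor of $M_\phi$ is nonzero. Fixing a choice of $d_E$ rows, this minor is a polynomial in the entries of $\ket{\phi_A}$ and of the $c_{ijk}$. So it suffices to show that for almost every tensor $(c_{ijk})$, there exists $\ket{\phi_A}$ making this polynomial nonzero; and since a polynomial that vanishes identically in $\ket{\phi_A}$ for a fixed $(c_{ijk})$ is itself a polynomial condition on $(c_{ijk})$, the standard Fubini/zero-set argument reduces everything to exhibiting a \emph{single} tensor $(c_{ijk})$ and a \emph{single} $\ket{\phi_A}$ for which $M_\phi$ has rank $d_E$.

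Producing that witness is elementary: take $d_E$ distinct indices $i_1,\dots,i_{d_E}$ among Alice's $d_A > 1$ levels — wait, we only need that a suitable $M_\phi$ has rank $d_E$, so even with $d_A = 2$ and a clever tensor it works; concretely, pick $\ket{\psi_{ABE}}$ of the form $\sum_{k=1}^{d_E}\ket{a_k}\ket{b_k}\ket{k}$ for suitable vectors, chosen so that for some $\ket{\phi_A}$ the vectors $\ket{b'_k} := \langle\phi_A|a_k\rangle\ket{b_k}$ are linearly independent in $\C{d_B}$ (possible since $d_E \le d_B$) — then $M_\phi$ has the $\ket{b'_k}$ as its columns and rank $d_E$. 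This establishes the existence of the required $\ket{\phi_A}$ for generic $(c_{ijk})$, hence almost surely; combined with the almost-sure low-rank condition and Theorem~\ref{theorem:main2}, we conclude $D_\to(\rho_{AB}) > 0$ almost surely. The main obstacle I anticipate is purely bookkeeping: carefully handling the different regimes of $d_A, d_B, d_E$ (in particular whether $d_A d_E \gtrless d_B$) so that the generic ranks come out as claimed, and making the two nested genericity arguments (in $\ket{\phi_A}$ and in $(c_{ijk})$) watertight via the fact that the zero set of a not-identically-zero real-analytic/polynomial function has measure zero.
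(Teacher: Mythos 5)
Your proposal is correct and follows essentially the same route as the paper: reduce to Theorem~\ref{theorem:main2} by showing both rank conditions hold almost surely. The only cosmetic difference is that the paper establishes the key condition $\rank\rho^\phi_B=d_E$ more directly, by taking $\ket{\phi_A}$ to be a basis vector and noting that the corresponding column of the Gaussian matrix representing $\ket{\psi_{ABE}}$ almost surely has full Schmidt rank, whereas you run a two-level genericity argument over $(c_{ijk})$ and $\ket{\phi_A}$ with an explicit witness.
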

\begin{proof}
See Appendix~\ref{appen:th-main3}
\end{proof}

The analogue of the above result for quantum channels was obtained recently in \cite{Singh2021coherent}. The distillability properties of low rank quantum states are summarized in Table~I.

\onecolumngrid

\begin{table}[H] \label{table:distill}
\centering
\begin{tabular}{|l|l|}
\hline 
$D_{\leftrightarrow}(\rho_{AB})\geq \lambda^B_{\min} r_B [\log r_B - \log r]>0. \quad$     &   \makecell{$D_{\to}(\rho_{AB})$ can be zero but \\ $D_{\to}(\rho_{AB})>0$ \emph{almost surely}.} \\[7pt]
\hline
\end{tabular}
\caption{Distillability of low rank quantum states $\rho_{AB}$ satisfying $r=\rank \rho_{AB} < \operatorname{rank}\rho_B=r_B$.}
\end{table} 

\twocolumngrid

Before proceeding further, the readers should get themselves familiarized with some previously known results on the entanglement properties of low rank quantum states, which are listed below. 

\begin{theorem} \cite[Theorem 1]{Horodecki2003entanglement}
Let $\rho_{AB}$ be a quantum state satisfying $\rank \rho_{AB} < \max\{ \rank \rho_A, \rank\rho_B \}$. Then, $\rho_{AB}$ is 2-way distillable.
\end{theorem}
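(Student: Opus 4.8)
The plan is to read this statement off directly from Theorem~\ref{theorem:main1}, of which it is the special case with the explicit bound dropped. The hypothesis $\rank\rho_{AB}<\max\{\rank\rho_A,\rank\rho_B\}$ says that $r:=\rank\rho_{AB}$ is strictly below at least one of $r_A:=\rank\rho_A$ or $r_B:=\rank\rho_B$. If $r<r_B$, the first bound of Theorem~\ref{theorem:main1} gives $D_{\leftrightarrow}(\rho_{AB})\geq \lambda^B_{\min}\,r_B[\log r_B-\log r]$; if instead $r<r_A$, its second bound gives $D_{\leftrightarrow}(\rho_{AB})\geq \lambda^A_{\min}\,r_A[\log r_A-\log r]$. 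In either case the right-hand side is strictly positive: the minimal positive eigenvalue is positive, the rank is at least $1$, and $\log r_B>\log r$ (resp.~$\log r_A>\log r$) because $r$ is strictly smaller. Hence $D_{\leftrightarrow}(\rho_{AB})>0$, i.e.~$\rho_{AB}$ is 2-way distillable.

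For completeness I would also recall the mechanism that makes Theorem~\ref{theorem:main1} work, since that is where the real content lies. Say $r<r_B$. Bob applies a local filter on his share whose ``success'' Kraus operator is proportional to $\rho_B^{-1/2}$ compressed to $\operatorname{supp}\rho_B$, chosen so that the renormalized post-selected state $\rho'_{AB}$ has $\rho'_B=\Pi_B/r_B$ with $\Pi_B$ the projector onto $\operatorname{supp}\rho_B$; he reports the outcome to Alice over the backward link so the two can discard the failures. Local filtering cannot raise the global rank, so $\rank\rho'_{AB}\leq r$ and the coherent information obeys $I^c_\to(\rho'_{AB})=S(\rho'_B)-S(\rho'_{AB})\geq\log r_B-\log r>0$. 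Feeding $\rho'_{AB}$ into the hashing protocol then distills at a positive 2-way rate equal to (success probability)$\,\times\,I^c_\to(\rho'_{AB})$, and the success probability is bounded below by $\lambda^B_{\min}r_B$, which reproduces the stated bound.

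I expect no genuine obstacle here, because all of the above is already carried out in the proof of Theorem~\ref{theorem:main1} in Appendix~\ref{appen:th-main1}; in this appendix the one-line reduction in the first paragraph suffices. Were one to prove the statement entirely from scratch, the only nontrivial steps would be (i) checking that a filter with the prescribed output marginal exists and succeeds with probability at least $\lambda^B_{\min}r_B$, a short computation with $\rho_B^{-1/2}$, and (ii) verifying that ``post-selected local filtering followed by hashing'' is a bona fide 2-LOCC protocol attaining that rate -- this being the one place where backward classical communication is genuinely needed, namely to relay the filter outcomes so that the failures can be discarded.
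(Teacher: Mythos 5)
Your proposal is correct and matches the paper's treatment: the paper states this result as a known citation and separately proves the strictly stronger Theorem~\ref{theorem:main1} (via local filtering plus hashing, exactly the mechanism you describe), from which the present statement follows immediately and non-circularly by the case split $r<r_B$ or $r<r_A$. The only cosmetic point is that the success probability of the filter is exactly $\lambda^B_{\min}r_B$, not merely bounded below by it.
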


\begin{theorem} \label{theorem:prev1}
Let $\rho_{AB}$ be a quantum state satisfying $\rank \rho_{AB} \leq \max \{\rank \rho_A, \rank \rho_B\}$. Then, the following are equivalent:
\begin{enumerate}
    \item $\rho_{AB}$ is separable.
    \item $\rho_{AB}$ is PPT \cite[Theorem 1]{Horodecki2000entanglement}.
    \item $D_{\leftrightarrow}(\rho_{AB})=0$ \cite[Theorem 10]{Chen2011entanglement}.
\end{enumerate}
\end{theorem}

\begin{theorem} \cite[Lemma 4]{Chen2011entanglement2} \label{theorem:prev2}
Let $\rho_{AB}$ be a quantum state such that its complement $\sigma_{AC}$ is separable. Then, $\rho_{AB}$ is separable if and only if $I^c_{\to}(\rho_{AB})=0$.
\end{theorem}

\begin{theorem}\cite[Theorem 8]{Chen2011entanglement} \label{theorem:prev3}
Let $\rho_{AB}$ be a quantum state such that for all $\ket{\phi_A}$,
\begin{equation}
\rank \rho^\phi_B < \min\{\rank\rho_{AB}, \rank\rho_B \}.
\end{equation}
Then, $\rho_{AB}$ is 2-way distillable.
\end{theorem}

Equipped with the necessary tools, we are now ready to state and prove our main results. Before proceeding further, let us recall that if two states $\rho_{AB}$ and $\sigma_{AC}$ are complementary to each other, then they admit a common purification $\ket{\psi_{ABC}}$:
\begin{equation*}
    \rho_{AB}=\operatorname{Tr}_C \ketbra{\psi_{ABC}}, \,\, \sigma_{AC}=\operatorname{Tr}_B \ketbra{\psi_{ABC}}.
\end{equation*}
The following rank equalities then easily follow from Schmidt decomposition:
\begin{equation*}
    \rank \rho_{AB} = \rank \sigma_C, \,\,     \rank \sigma_{AC} = \operatorname{rank} \rho_B.    
\end{equation*}

\begin{theorem}\label{theorem:main4}
Let $\rho_{AB}$ be a state with $D_{\to}(\sigma_{AC})=0$, where $\sigma_{AC}$ is complementary to $\rho_{AB}$. Then, the following are equivalent:
\begin{enumerate}
    \item $\rho_{AB}$ is separable. \vspace{-0.2cm}
  \item $\rho_{AB}$ is PPT.
    \vspace{-0.2cm}
   \item $D_{\leftrightarrow}(\rho_{AB})=0$.
\end{enumerate}
\end{theorem}
\begin{proof}
The forward implications $1\implies 2\implies 3$ are well-known. To show that $3\implies 1$, assume that $D_{\leftrightarrow}(\rho_{AB})=0$. Since $D_{\to}(\sigma_{AC})=0$,
Theorem~\ref{theorem:main2} shows that either
\begin{equation}\label{eq:bla}
 \rank \rho_{AB} = \rank \sigma_C
    \leq \rank \sigma_{AC} = \operatorname{rank} \rho_B,    
\end{equation}
or $\sigma_{AC}$ is such that for all $\ket{\phi_A}\in \C{d_A}$,
\begin{align}
     \rank \rho^\phi_B = \rank \sigma^{\phi}_C &< \min \{\rank \sigma_{C}, \rank \sigma_{AC} \} \\ 
     & = \min \{\rank \rho_{AB}, \rank\rho_B \}. \nonumber
\end{align}
However, the latter condition implies that $D_{\leftrightarrow}(\rho_{AB})>0$ (Theorem~\ref{theorem:prev3}), which leads to a contradiction. Hence, Eq.~\eqref{eq:bla} is the only possibility, in which case Theorem~\ref{theorem:prev1} shows that $\rho_{AB}$ is separable.
\end{proof}

\begin{remark}
It is pertinent to mention here that the equivalences stated in Theorem~\ref{theorem:main4} were also obtained in \cite[Theorem 2]{Chen2011entanglement2}, but under a much stronger assumption of $\sigma_{AC}$ being PPT.
\end{remark}

Note that a state $\rho_{AB}$ with $D_{\to}(\sigma_{AC})=0$ may be such that $D_{\to}(\rho_{AB})=0$ but $D_{\leftrightarrow}(\rho_{AB})>0$. Choi states of NPT \emph{self-complementary} channels (i.e., channels $\Phi$ with NPT Choi states which admit a complement $\Phi^c=\Phi$) are examples of such states. For instance, the qutrit \emph{Werner-Holevo} channel $\Phi_{\rm{WH}}:\M{3}\to \M{3}$ defined as
\begin{equation}
    \Phi_{\rm{WH}}(X) = \frac{\operatorname{Tr}(X)\iden_3-X^\top}{2}
\end{equation}
is NPT self-complementary \cite{Cubitt2008degradable}, so that $\mathcal{Q}_{\to}(\Phi_{\rm{WH}})=0$. However, Theorem~\ref{theorem:main4} shows that $\mathcal{Q}_{\leftrightarrow}(\Phi_{\rm{WH}})>0$. We are thus led to the following result.
\begin{corollary}\label{coro:main3}
For any NPT self-complementary channel $\Phi:\M{d_A}\to \M{d_B}$, $D_{\to}(J_{AB}(\Phi))=\mathcal{Q}_{\to}(\Phi)=0$ but $\mathcal{Q}_{\leftrightarrow}(\Phi)\geq D_{\leftrightarrow}(J_{AB}(\Phi))>0$.
\end{corollary}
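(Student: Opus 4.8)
The plan is to derive Corollary~\ref{coro:main3} directly from Theorem~\ref{theorem:main4} together with the structural facts already assembled about NPT self-complementary channels. First I would observe that for a self-complementary channel $\Phi$, one may take $\Phi^c = \Phi$, so that a purification $\ket{\psi_{ABE}}$ of the Choi state $J_{AB}(\Phi)$ has the property that the complementary reduction $\rho_{AE}$ is (up to the isometric freedom in the purification, which does not affect distillable entanglement as noted in the footnote) equal to $J_{AB}(\Phi)$ itself. Hence $\rho_{AB} = J_{AB}(\Phi) = \rho_{AE}$ as far as all distillability quantities are concerned.

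Next I would establish the two halves of the statement. For the "$D_\to = \mathcal Q_\to = 0$" half: since $\Phi$ is self-complementary and antidegradable channels have zero $1$-way quantum capacity, one cites the known fact (as in the Werner--Holevo example, from \cite{Cubitt2008degradable, Smith2012incapacity}) that a self-complementary channel satisfies $\mathcal Q_\to(\Phi) = 0$; this is because $\Phi = \Phi^c$ forces $\Phi$ to be antidegradable (the degrading map being the identity, or more precisely $\Phi^c = \mathrm{id}\circ\Phi$). Then Eq.~\eqref{eq:cap-distill-bound1}, read in the contrapositive direction together with Eq.~\eqref{eq:cap-distill-bound2}, or simply the bound $D_\to(J_{AB}(\Phi)) \le \mathcal Q_\to(\Phi)$ implicit in \eqref{eq:cap-distill-bound2}, gives $D_\to(J_{AB}(\Phi)) = 0$.

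For the "$\mathcal Q_\leftrightarrow \ge D_\leftrightarrow(J_{AB}(\Phi)) > 0$" half, the key point is that $J_{AB}(\Phi)$ is NPT, hence \emph{not} separable and \emph{not} PPT. Now apply Theorem~\ref{theorem:main4} with the roles set up by self-complementarity: because $\rho_{AE}$ has the same distillability as $\rho_{AB} = J_{AB}(\Phi)$, and we have just shown $D_\to(\rho_{AB}) = 0$, we get $D_\to(\rho_{AE}) = 0$, so the hypothesis of Theorem~\ref{theorem:main4} is met. Since $\rho_{AB}$ is not PPT (equivalently not separable), the equivalence $2 \Leftrightarrow 3$ in Theorem~\ref{theorem:main4} forces $D_\leftrightarrow(\rho_{AB}) \neq 0$, i.e.\ $D_\leftrightarrow(J_{AB}(\Phi)) > 0$. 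Combining with the first inequality in Eq.~\eqref{eq:cap-distill-bound1}, $\mathcal Q_\leftrightarrow(\Phi) \ge D_\leftrightarrow(J_{AB}(\Phi)) > 0$, which completes the argument.

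The only real subtlety — and the step I would be most careful about — is the identification $\rho_{AE} \simeq J_{AB}(\Phi)$ for self-complementary channels and making sure the isometric ambiguity is handled correctly: one must check that the complementary state of $J_{AB}(\Phi)$, as defined via $\rho_{AE} = \operatorname{Tr}_B\ketbra{\psi_{ABE}}$, is genuinely a Choi state of a channel complementary to $\Phi$, and then invoke $\Phi^c = \Phi$ together with the footnoted invariance of $D_\to$ and $D_\leftrightarrow$ under the choice of purification. Everything else is a direct concatenation of Theorem~\ref{theorem:main4}, Theorem~\ref{theorem:main3} (via the Werner--Holevo illustration), and the capacity/distillation bounds \eqref{eq:cap-distill-bound1}--\eqref{eq:cap-distill-bound2}; no new estimates are needed.
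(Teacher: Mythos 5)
Your proof is correct and follows essentially the same route as the paper: self-complementarity gives antidegradability, hence $\mathcal{Q}_{\to}(\Phi)=0$ and $D_{\to}(J_{AB}(\Phi))=D_{\to}(\rho_{AE})=0$, after which the equivalence $2\Leftrightarrow 3$ of Theorem~\ref{theorem:main4} applied to the NPT state $J_{AB}(\Phi)$ yields $D_{\leftrightarrow}(J_{AB}(\Phi))>0$, and Eq.~\eqref{eq:cap-distill-bound1} finishes the claim. The only quibble is that the bound $D_{\to}(J_{AB}(\Phi))\le \mathcal{Q}_{\to}(\Phi)$ is Eq.~\eqref{eq:cap-distill-bound1} rather than Eq.~\eqref{eq:cap-distill-bound2}, a labeling slip that does not affect the argument.
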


We now show that if a state $\rho_{AB}$ is such that $D_{\leftrightarrow}(\sigma_{AC})=0$, then it is 1-way undistillable if and only if it is 2-way undistillable.

\begin{theorem}\label{theorem:main5}
Let $\rho_{AB}$ be a state with $D_{\leftrightarrow}(\sigma_{AC})=0$, where $\sigma_{AC}$ is complementary to $\rho_{AB}$. Then, the following are equivalent:
\begin{enumerate}
    \item $\rho_{AB}$ is separable. \vspace{-0.2cm}
    \item $\rho_{AB}$ is PPT.
    \vspace{-0.2cm}
    \item $D_{\leftrightarrow}(\rho_{AB})=0$.
    \vspace{-0.2cm}
    \item $D_{\to}(\rho_{AB})=0$.
\end{enumerate}
\end{theorem}
\begin{proof}
All the forward implications are well-known. Hence, it suffices to prove $4\implies 1$. So assume that $D_{\to}(\rho_{AB})=0$. Then, $\sigma_{AC}$ must be separable according to Theorem~\ref{theorem:main4}. Moreover, since $D_{\to}(\rho_{AB})=0=D_{\to}(\sigma_{AC})$, we infer from the hashing bound that
\begin{align}
    I^c_{\to}(\rho_{AB}) &\leq 0, \nonumber \\
    I_{\to}^c(\sigma_{AC}) = -I^c_{\to}(\rho_{AB}) &\leq 0.
\end{align}
Hence, $I^c_{\to}(\rho_{AB})=0$ and the desired conclusion follows from Theorem~\ref{theorem:prev2}.
\end{proof}

By using Theorem \ref{theorem:main4} and Theorem \ref{theorem:main5}, our primary result as stated in the Introduction can be derived with ease.

\begin{corollary}\label{coro:main1}
Let $\ket{\psi_{ABC}}$ be a tripartite pure state. Furthermore, let $\rho_{AB}=\operatorname{Tr}_C \ketbra{\psi_{ABC}}$ and $\sigma_{AC}=\operatorname{Tr}_B \ketbra{\psi_{ABC}}$.  Then, the following are equivalent:
\begin{enumerate}
    \item $\rho_{AB}$ and $\sigma_{AC}$ are separable.
    \vspace{-0.1cm}
    \item $\rho_{AB}$ and $\sigma_{AC}$ are PPT.
    \vspace{-0.1cm}
    \item $D^{\leftrightarrow}_{\leftrightarrow}(\psi_{ABC})=0$.
    \vspace{-0.1cm}
    \item $D^{\leftrightarrow}_{\to}(\psi_{ABC})=0$.
    \vspace{-0.1cm}
    \item $D^{\to}_{\leftrightarrow}(\psi_{ABC})=0.$
\end{enumerate}
\end{corollary}
\begin{proof}
The stated equivalences follow immediately from Theorems~\ref{theorem:main4} and \ref{theorem:main5}.
\end{proof}

The above corollary can be considered as a significant generalization of \cite[Corollary 6]{Singh2022bippt}, where it was shown that if $\rho_{AB}$ and its complement $\sigma_{AC}$ are both PPT, then both of them must be separable.

Finally, let us recall that every PPT state is $2$-way undistillable, but we do not know if the converse holds \cite{Shor2000evidence, Pankowski2010nptbound}. The following corollary is a partial resolution of this problem. We show that even if there exists an NPT 2-way undistillable state $\rho_{AB}$, Alice can still distill entanglement with Charlie by using the complementary state $\sigma_{AC}$.

\begin{corollary}
Let $\rho_{AB}$ be an NPT state. Then, either $\rho_{AB}$ or its complement $\sigma_{AC}$ is 2-way distillable.
\end{corollary}
\begin{proof}
Assume on the contrary that $D_{\leftrightarrow}(\rho_{AB})=0=D_{\leftrightarrow}(\sigma_{AC})$. Then, Corollary~\ref{coro:main1} shows that $\rho_{AB}$ and $\sigma_{AC}$ are both PPT, thus leading to a contradiction.
\end{proof}

\section{Conclusion}
In this paper, we have studied how different configurations of classical communication between three parties sharing a pure state $\ket{\psi_{ABC}}$ affect their ability to distill entanglement. Our main result shows that the properties of being $2, (1,2),$ and $(2,1)-$way fully undistillable are equivalent for $\ket{\psi_{ABC}}$. In other words, if Alice is already allowed two-way classical communication
with Bob (say), then allowing two-way communication between Alice and Charlie is no better than just forward classical communication from Alice to Charlie. Moreover, this happens precisely when $\ket{\psi_{ABC}}$ is such that both its reductions $\rho_{AB}$ and $\sigma_{AC}$ are PPT, which is further equivalent to both the reductions being separable. In addition to providing new insight into the role of classical communication in entanglement distillation, our results also shed light on the structure of undistillable quantum states.

Since the 2-way full undistillability of a tripartite pure state implies separability of its two bipartite reductions (cf.~Corollary \ref{coro:main1}), it means that in this case, these reductions are also useless from the perspective of private key distillation. Hence, in the tripartite distillation setting that we consider, the notions of being 2-way entanglement-undistillable and 2-way key-undistillable are equivalent, in contrast to the bipartite setting \cite{Horodecki2005private, Horodecki2009private}.

It would be interesting to study the tightness of the lower bounds on distillable entanglement obtained in Theorem~\ref{theorem:main1}, perhaps by comparing them with some well-known upper bounds, such as the Rains bound \cite{Rains2001bound} and squashed entanglement \cite{Christandl2004bound}. 

In the spirit of Theorem~\ref{theorem:main3}, it is worth investigating if a threshold value $d^{\to}_C$ of the dimension of the purifying system can be obtained such that for $d_C<d^{\to}_C$ (resp. $d_C>d^{\to}_C$), a random state $\rho_{AB}\sim \mu_{d_C}$ typically satisfies $D_{\to}(\rho_{AB})>0$ (resp. $D_{\to}(\rho_{AB})=0$). Identifying a similar threshold $d^{\leftrightarrow}_C$ for the 2-way distillable entanglement and comparing it with the known threshold $d^{\operatorname{PPT}}_C$ for the PPT property of bipartite states \cite{Auburn2012thresh} can yield valuable insights into the NPT bound entanglement problem. \\

\textit{Acknowledgements} We would like to thank Mark Wilde for helpful comments on an earlier version of this paper. 

\PRLsep

\bibliography{references}

\begin{thebibliography}{37}%
\makeatletter
\providecommand \@ifxundefined [1]{%
 \@ifx{#1\undefined}
}%
\providecommand \@ifnum [1]{%
 \ifnum #1\expandafter \@firstoftwo
 \else \expandafter \@secondoftwo
 \fi
}%
\providecommand \@ifx [1]{%
 \ifx #1\expandafter \@firstoftwo
 \else \expandafter \@secondoftwo
 \fi
}%
\providecommand \natexlab [1]{#1}%
\providecommand \enquote  [1]{``#1''}%
\providecommand \bibnamefont  [1]{#1}%
\providecommand \bibfnamefont [1]{#1}%
\providecommand \citenamefont [1]{#1}%
\providecommand \href@noop [0]{\@secondoftwo}%
\providecommand \href [0]{\begingroup \@sanitize@url \@href}%
\providecommand \@href[1]{\@@startlink{#1}\@@href}%
\providecommand \@@href[1]{\endgroup#1\@@endlink}%
\providecommand \@sanitize@url [0]{\catcode `\\12\catcode `\$12\catcode
  `\&12\catcode `\#12\catcode `\^12\catcode `\_12\catcode `\%12\relax}%
\providecommand \@@startlink[1]{}%
\providecommand \@@endlink[0]{}%
\providecommand \url  [0]{\begingroup\@sanitize@url \@url }%
\providecommand \@url [1]{\endgroup\@href {#1}{\urlprefix }}%
\providecommand \urlprefix  [0]{URL }%
\providecommand \Eprint [0]{\href }%
\providecommand \doibase [0]{https://doi.org/}%
\providecommand \selectlanguage [0]{\@gobble}%
\providecommand \bibinfo  [0]{\@secondoftwo}%
\providecommand \bibfield  [0]{\@secondoftwo}%
\providecommand \translation [1]{[#1]}%
\providecommand \BibitemOpen [0]{}%
\providecommand \bibitemStop [0]{}%
\providecommand \bibitemNoStop [0]{.\EOS\space}%
\providecommand \EOS [0]{\spacefactor3000\relax}%
\providecommand \BibitemShut  [1]{\csname bibitem#1\endcsname}%
\let\auto@bib@innerbib\@empty
\bibitem [{\citenamefont {Bennett}\ \emph {et~al.}(1993)\citenamefont
  {Bennett}, \citenamefont {Brassard}, \citenamefont {Cr\'epeau}, \citenamefont
  {Jozsa}, \citenamefont {Peres},\ and\ \citenamefont
  {Wootters}}]{Bennett1993teleportation}%
  \BibitemOpen
  \bibfield  {author} {\bibinfo {author} {\bibfnamefont {C.~H.}\ \bibnamefont
  {Bennett}}, \bibinfo {author} {\bibfnamefont {G.}~\bibnamefont {Brassard}},
  \bibinfo {author} {\bibfnamefont {C.}~\bibnamefont {Cr\'epeau}}, \bibinfo
  {author} {\bibfnamefont {R.}~\bibnamefont {Jozsa}}, \bibinfo {author}
  {\bibfnamefont {A.}~\bibnamefont {Peres}},\ and\ \bibinfo {author}
  {\bibfnamefont {W.~K.}\ \bibnamefont {Wootters}},\ }\bibfield  {title}
  {\bibinfo {title} {Teleporting an unknown quantum state via dual classical
  and einstein-podolsky-rosen channels},\ }\href
  {https://doi.org/10.1103/PhysRevLett.70.1895} {\bibfield  {journal} {\bibinfo
   {journal} {Phys. Rev. Lett.}\ }\textbf {\bibinfo {volume} {70}},\ \bibinfo
  {pages} {1895} (\bibinfo {year} {1993})}\BibitemShut {NoStop}%
\bibitem [{\citenamefont {Ekert}(1991)}]{Ekert1991crypto}%
  \BibitemOpen
  \bibfield  {author} {\bibinfo {author} {\bibfnamefont {A.~K.}\ \bibnamefont
  {Ekert}},\ }\bibfield  {title} {\bibinfo {title} {Quantum cryptography based
  on bell's theorem},\ }\href {https://doi.org/10.1103/PhysRevLett.67.661}
  {\bibfield  {journal} {\bibinfo  {journal} {Phys. Rev. Lett.}\ }\textbf
  {\bibinfo {volume} {67}},\ \bibinfo {pages} {661} (\bibinfo {year}
  {1991})}\BibitemShut {NoStop}%
\bibitem [{\citenamefont {Gisin}\ \emph {et~al.}(2002)\citenamefont {Gisin},
  \citenamefont {Ribordy}, \citenamefont {Tittel},\ and\ \citenamefont
  {Zbinden}}]{Gisin2002crypto}%
  \BibitemOpen
  \bibfield  {author} {\bibinfo {author} {\bibfnamefont {N.}~\bibnamefont
  {Gisin}}, \bibinfo {author} {\bibfnamefont {G.}~\bibnamefont {Ribordy}},
  \bibinfo {author} {\bibfnamefont {W.}~\bibnamefont {Tittel}},\ and\ \bibinfo
  {author} {\bibfnamefont {H.}~\bibnamefont {Zbinden}},\ }\bibfield  {title}
  {\bibinfo {title} {Quantum cryptography},\ }\href
  {https://doi.org/10.1103/RevModPhys.74.145} {\bibfield  {journal} {\bibinfo
  {journal} {Rev. Mod. Phys.}\ }\textbf {\bibinfo {volume} {74}},\ \bibinfo
  {pages} {145} (\bibinfo {year} {2002})}\BibitemShut {NoStop}%
\bibitem [{\citenamefont {Bennett}\ and\ \citenamefont
  {Wiesner}(1992)}]{Bennett1992superdense}%
  \BibitemOpen
  \bibfield  {author} {\bibinfo {author} {\bibfnamefont {C.~H.}\ \bibnamefont
  {Bennett}}\ and\ \bibinfo {author} {\bibfnamefont {S.~J.}\ \bibnamefont
  {Wiesner}},\ }\bibfield  {title} {\bibinfo {title} {Communication via one-
  and two-particle operators on einstein-podolsky-rosen states},\ }\href
  {https://doi.org/10.1103/PhysRevLett.69.2881} {\bibfield  {journal} {\bibinfo
   {journal} {Phys. Rev. Lett.}\ }\textbf {\bibinfo {volume} {69}},\ \bibinfo
  {pages} {2881} (\bibinfo {year} {1992})}\BibitemShut {NoStop}%
\bibitem [{\citenamefont {Bennett}\ \emph
  {et~al.}(1996{\natexlab{a}})\citenamefont {Bennett}, \citenamefont
  {Brassard}, \citenamefont {Popescu}, \citenamefont {Schumacher},
  \citenamefont {Smolin},\ and\ \citenamefont
  {Wootters}}]{Bennett1996purification}%
  \BibitemOpen
  \bibfield  {author} {\bibinfo {author} {\bibfnamefont {C.~H.}\ \bibnamefont
  {Bennett}}, \bibinfo {author} {\bibfnamefont {G.}~\bibnamefont {Brassard}},
  \bibinfo {author} {\bibfnamefont {S.}~\bibnamefont {Popescu}}, \bibinfo
  {author} {\bibfnamefont {B.}~\bibnamefont {Schumacher}}, \bibinfo {author}
  {\bibfnamefont {J.~A.}\ \bibnamefont {Smolin}},\ and\ \bibinfo {author}
  {\bibfnamefont {W.~K.}\ \bibnamefont {Wootters}},\ }\bibfield  {title}
  {\bibinfo {title} {Purification of noisy entanglement and faithful
  teleportation via noisy channels},\ }\href
  {https://doi.org/10.1103/PhysRevLett.76.722} {\bibfield  {journal} {\bibinfo
  {journal} {Phys. Rev. Lett.}\ }\textbf {\bibinfo {volume} {76}},\ \bibinfo
  {pages} {722} (\bibinfo {year} {1996}{\natexlab{a}})}\BibitemShut {NoStop}%
\bibitem [{\citenamefont {Bennett}\ \emph
  {et~al.}(1996{\natexlab{b}})\citenamefont {Bennett}, \citenamefont
  {DiVincenzo}, \citenamefont {Smolin},\ and\ \citenamefont
  {Wootters}}]{Bennett1996distillationQECC}%
  \BibitemOpen
  \bibfield  {author} {\bibinfo {author} {\bibfnamefont {C.~H.}\ \bibnamefont
  {Bennett}}, \bibinfo {author} {\bibfnamefont {D.~P.}\ \bibnamefont
  {DiVincenzo}}, \bibinfo {author} {\bibfnamefont {J.~A.}\ \bibnamefont
  {Smolin}},\ and\ \bibinfo {author} {\bibfnamefont {W.~K.}\ \bibnamefont
  {Wootters}},\ }\bibfield  {title} {\bibinfo {title} {Mixed-state entanglement
  and quantum error correction},\ }\href
  {https://doi.org/10.1103/PhysRevA.54.3824} {\bibfield  {journal} {\bibinfo
  {journal} {Phys. Rev. A}\ }\textbf {\bibinfo {volume} {54}},\ \bibinfo
  {pages} {3824} (\bibinfo {year} {1996}{\natexlab{b}})}\BibitemShut {NoStop}%
\bibitem [{\citenamefont {Horodecki}\ and\ \citenamefont
  {Horodecki}(2001)}]{Horodecki2001review}%
  \BibitemOpen
  \bibfield  {author} {\bibinfo {author} {\bibfnamefont {P.}~\bibnamefont
  {Horodecki}}\ and\ \bibinfo {author} {\bibfnamefont {R.}~\bibnamefont
  {Horodecki}},\ }\bibfield  {title} {\bibinfo {title} {Distillation and bound
  entanglement},\ }\href@noop {} {\bibfield  {journal} {\bibinfo  {journal}
  {Quantum Info. Comput.}\ }\textbf {\bibinfo {volume} {1}},\ \bibinfo {pages}
  {45–75} (\bibinfo {year} {2001})}\BibitemShut {NoStop}%
\bibitem [{Note1()}]{Note1}%
  \BibitemOpen
  \bibinfo {note} {The isometric freedom in choosing a purification $\mathinner
  {|{\psi _{ABC}}\rangle }$ of $\rho _{AB}$ imparts a similar freedom in the
  choice of a complementary state $\sigma _{AC}$. However, it is easy to check
  that different choices of $\sigma _{AC}$ yield the same value of distillable
  entanglement $D_{\to }(\sigma _{AC})$ or $D_{\leftrightarrow }(\sigma
  _{AC})$.}\BibitemShut {Stop}%
\bibitem [{\citenamefont {Coffman}\ \emph {et~al.}(2000)\citenamefont
  {Coffman}, \citenamefont {Kundu},\ and\ \citenamefont
  {Wootters}}]{Coffman2000monogamy}%
  \BibitemOpen
  \bibfield  {author} {\bibinfo {author} {\bibfnamefont {V.}~\bibnamefont
  {Coffman}}, \bibinfo {author} {\bibfnamefont {J.}~\bibnamefont {Kundu}},\
  and\ \bibinfo {author} {\bibfnamefont {W.~K.}\ \bibnamefont {Wootters}},\
  }\bibfield  {title} {\bibinfo {title} {Distributed entanglement},\ }\href
  {https://doi.org/10.1103/PhysRevA.61.052306} {\bibfield  {journal} {\bibinfo
  {journal} {Phys. Rev. A}\ }\textbf {\bibinfo {volume} {61}},\ \bibinfo
  {pages} {052306} (\bibinfo {year} {2000})}\BibitemShut {NoStop}%
\bibitem [{\citenamefont {Osborne}\ and\ \citenamefont
  {Verstraete}(2006)}]{Osborne2006monogamy}%
  \BibitemOpen
  \bibfield  {author} {\bibinfo {author} {\bibfnamefont {T.~J.}\ \bibnamefont
  {Osborne}}\ and\ \bibinfo {author} {\bibfnamefont {F.}~\bibnamefont
  {Verstraete}},\ }\bibfield  {title} {\bibinfo {title} {General monogamy
  inequality for bipartite qubit entanglement},\ }\href
  {https://doi.org/10.1103/PhysRevLett.96.220503} {\bibfield  {journal}
  {\bibinfo  {journal} {Phys. Rev. Lett.}\ }\textbf {\bibinfo {volume} {96}},\
  \bibinfo {pages} {220503} (\bibinfo {year} {2006})}\BibitemShut {NoStop}%
\bibitem [{\citenamefont {Smolin}\ \emph {et~al.}(2005)\citenamefont {Smolin},
  \citenamefont {Verstraete},\ and\ \citenamefont
  {Winter}}]{Smolin2005assistance}%
  \BibitemOpen
  \bibfield  {author} {\bibinfo {author} {\bibfnamefont {J.~A.}\ \bibnamefont
  {Smolin}}, \bibinfo {author} {\bibfnamefont {F.}~\bibnamefont {Verstraete}},\
  and\ \bibinfo {author} {\bibfnamefont {A.}~\bibnamefont {Winter}},\
  }\bibfield  {title} {\bibinfo {title} {Entanglement of assistance and
  multipartite state distillation},\ }\bibfield  {journal} {\bibinfo  {journal}
  {Physical Review A}\ }\textbf {\bibinfo {volume} {72}},\ \href
  {https://doi.org/10.1103/physreva.72.052317} {10.1103/physreva.72.052317}
  (\bibinfo {year} {2005})\BibitemShut {NoStop}%
\bibitem [{\citenamefont {Wolf}(2012)}]{Wolf2012Qtour}%
  \BibitemOpen
  \bibfield  {author} {\bibinfo {author} {\bibfnamefont {M.~M.}\ \bibnamefont
  {Wolf}},\ }\bibfield  {title} {\bibinfo {title} {Quantum channels and
  operations: Guided tour},\ }\href@noop {} {\bibfield  {journal} {\bibinfo
  {journal} {(unpublished)}\ } (\bibinfo {year} {2012})}\BibitemShut {NoStop}%
\bibitem [{\citenamefont {Horodecki}\ \emph {et~al.}(1996)\citenamefont
  {Horodecki}, \citenamefont {Horodecki},\ and\ \citenamefont
  {Horodecki}}]{Horodecki1996sep}%
  \BibitemOpen
  \bibfield  {author} {\bibinfo {author} {\bibfnamefont {M.}~\bibnamefont
  {Horodecki}}, \bibinfo {author} {\bibfnamefont {P.}~\bibnamefont
  {Horodecki}},\ and\ \bibinfo {author} {\bibfnamefont {R.}~\bibnamefont
  {Horodecki}},\ }\bibfield  {title} {\bibinfo {title} {Separability of mixed
  states: necessary and sufficient conditions},\ }\href
  {https://doi.org/10.1016/s0375-9601(96)00706-2} {\bibfield  {journal}
  {\bibinfo  {journal} {Physics Letters A}\ }\textbf {\bibinfo {volume}
  {223}},\ \bibinfo {pages} {1} (\bibinfo {year} {1996})}\BibitemShut {NoStop}%
\bibitem [{\citenamefont {DiVincenzo}\ \emph {et~al.}(2000)\citenamefont
  {DiVincenzo}, \citenamefont {Shor}, \citenamefont {Smolin}, \citenamefont
  {Terhal},\ and\ \citenamefont {Thapliyal}}]{Shor2000evidence}%
  \BibitemOpen
  \bibfield  {author} {\bibinfo {author} {\bibfnamefont {D.~P.}\ \bibnamefont
  {DiVincenzo}}, \bibinfo {author} {\bibfnamefont {P.~W.}\ \bibnamefont
  {Shor}}, \bibinfo {author} {\bibfnamefont {J.~A.}\ \bibnamefont {Smolin}},
  \bibinfo {author} {\bibfnamefont {B.~M.}\ \bibnamefont {Terhal}},\ and\
  \bibinfo {author} {\bibfnamefont {A.~V.}\ \bibnamefont {Thapliyal}},\
  }\bibfield  {title} {\bibinfo {title} {Evidence for bound entangled states
  with negative partial transpose},\ }\href
  {https://doi.org/10.1103/PhysRevA.61.062312} {\bibfield  {journal} {\bibinfo
  {journal} {Phys. Rev. A}\ }\textbf {\bibinfo {volume} {61}},\ \bibinfo
  {pages} {062312} (\bibinfo {year} {2000})}\BibitemShut {NoStop}%
\bibitem [{\citenamefont {Pankowski}\ \emph {et~al.}(2010)\citenamefont
  {Pankowski}, \citenamefont {Piani}, \citenamefont {Horodecki},\ and\
  \citenamefont {Horodecki}}]{Pankowski2010nptbound}%
  \BibitemOpen
  \bibfield  {author} {\bibinfo {author} {\bibfnamefont {{\L}.}~\bibnamefont
  {Pankowski}}, \bibinfo {author} {\bibfnamefont {M.}~\bibnamefont {Piani}},
  \bibinfo {author} {\bibfnamefont {M.}~\bibnamefont {Horodecki}},\ and\
  \bibinfo {author} {\bibfnamefont {P.}~\bibnamefont {Horodecki}},\ }\bibfield
  {title} {\bibinfo {title} {A few steps more towards {NPT} bound
  entanglement},\ }\href {https://doi.org/10.1109/tit.2010.2050810} {\bibfield
  {journal} {\bibinfo  {journal} {{IEEE} Transactions on Information Theory}\
  }\textbf {\bibinfo {volume} {56}},\ \bibinfo {pages} {4085} (\bibinfo {year}
  {2010})}\BibitemShut {NoStop}%
\bibitem [{\citenamefont {Horodecki}\ \emph {et~al.}(2003)\citenamefont
  {Horodecki}, \citenamefont {Smolin}, \citenamefont {Terhal},\ and\
  \citenamefont {Thapliyal}}]{Horodecki2003entanglement}%
  \BibitemOpen
  \bibfield  {author} {\bibinfo {author} {\bibfnamefont {P.}~\bibnamefont
  {Horodecki}}, \bibinfo {author} {\bibfnamefont {J.~A.}\ \bibnamefont
  {Smolin}}, \bibinfo {author} {\bibfnamefont {B.~M.}\ \bibnamefont {Terhal}},\
  and\ \bibinfo {author} {\bibfnamefont {A.~V.}\ \bibnamefont {Thapliyal}},\
  }\bibfield  {title} {\bibinfo {title} {Rank two bipartite bound entangled
  states do not exist},\ }\href {https://doi.org/10.1016/s0304-3975(01)00376-0}
  {\bibfield  {journal} {\bibinfo  {journal} {Theoretical Computer Science}\
  }\textbf {\bibinfo {volume} {292}},\ \bibinfo {pages} {589} (\bibinfo {year}
  {2003})}\BibitemShut {NoStop}%
\bibitem [{\citenamefont {Devetak}\ and\ \citenamefont
  {Winter}(2005)}]{Devetak2005distillation}%
  \BibitemOpen
  \bibfield  {author} {\bibinfo {author} {\bibfnamefont {I.}~\bibnamefont
  {Devetak}}\ and\ \bibinfo {author} {\bibfnamefont {A.}~\bibnamefont
  {Winter}},\ }\bibfield  {title} {\bibinfo {title} {Distillation of secret key
  and entanglement from quantum states},\ }\href
  {https://doi.org/10.1098/rspa.2004.1372} {\bibfield  {journal} {\bibinfo
  {journal} {Proceedings of the Royal Society A: Mathematical, Physical and
  Engineering Sciences}\ }\textbf {\bibinfo {volume} {461}},\ \bibinfo {pages}
  {207} (\bibinfo {year} {2005})}\BibitemShut {NoStop}%
\bibitem [{\citenamefont {Cubitt}\ \emph {et~al.}(2008)\citenamefont {Cubitt},
  \citenamefont {Ruskai},\ and\ \citenamefont {Smith}}]{Cubitt2008degradable}%
  \BibitemOpen
  \bibfield  {author} {\bibinfo {author} {\bibfnamefont {T.~S.}\ \bibnamefont
  {Cubitt}}, \bibinfo {author} {\bibfnamefont {M.~B.}\ \bibnamefont {Ruskai}},\
  and\ \bibinfo {author} {\bibfnamefont {G.}~\bibnamefont {Smith}},\ }\bibfield
   {title} {\bibinfo {title} {The structure of degradable quantum channels},\
  }\href {https://doi.org/10.1063/1.2953685} {\bibfield  {journal} {\bibinfo
  {journal} {J. Math. Phys.}\ }\textbf {\bibinfo {volume} {49}},\ \bibinfo
  {pages} {102104} (\bibinfo {year} {2008})}\BibitemShut {NoStop}%
\bibitem [{\citenamefont {Smith}\ and\ \citenamefont
  {Smolin}(2012)}]{Smith2012incapacity}%
  \BibitemOpen
  \bibfield  {author} {\bibinfo {author} {\bibfnamefont {G.}~\bibnamefont
  {Smith}}\ and\ \bibinfo {author} {\bibfnamefont {J.~A.}\ \bibnamefont
  {Smolin}},\ }\bibfield  {title} {\bibinfo {title} {Detecting incapacity of a
  quantum channel},\ }\href {https://doi.org/10.1103/PhysRevLett.108.230507}
  {\bibfield  {journal} {\bibinfo  {journal} {Phys. Rev. Lett.}\ }\textbf
  {\bibinfo {volume} {108}},\ \bibinfo {pages} {230507} (\bibinfo {year}
  {2012})}\BibitemShut {NoStop}%
\bibitem [{Note2()}]{Note2}%
  \BibitemOpen
  \bibinfo {note} {In \cite {Chen2011entanglement}, a state $\rho _{AB}$ is
  said to have the \protect \emph {right full-rank property} if there exists
  $\mathinner {|{\phi _A}\rangle }$ which saturates the bound in Eq.~\protect
  \textup {\hbox {\mathsurround \z@ \protect \normalfont (\ignorespaces \ref
  {eq:rankbound}\unskip \@@italiccorr )}}}\BibitemShut {NoStop}%
\bibitem [{\citenamefont {Siddhu}(2021)}]{Siddhu2021logsingularity}%
  \BibitemOpen
  \bibfield  {author} {\bibinfo {author} {\bibfnamefont {V.}~\bibnamefont
  {Siddhu}},\ }\bibfield  {title} {\bibinfo {title} {Entropic singularities
  give rise to quantum transmission},\ }\bibfield  {journal} {\bibinfo
  {journal} {Nature Communications}\ }\textbf {\bibinfo {volume} {12}},\ \href
  {https://doi.org/10.1038/s41467-021-25954-0} {10.1038/s41467-021-25954-0}
  (\bibinfo {year} {2021})\BibitemShut {NoStop}%
\bibitem [{\citenamefont {Singh}\ and\ \citenamefont
  {Datta}(2022{\natexlab{a}})}]{Singh2022detecting}%
  \BibitemOpen
  \bibfield  {author} {\bibinfo {author} {\bibfnamefont {S.}~\bibnamefont
  {Singh}}\ and\ \bibinfo {author} {\bibfnamefont {N.}~\bibnamefont {Datta}},\
  }\bibfield  {title} {\bibinfo {title} {Detecting positive quantum capacities
  of quantum channels},\ }\bibfield  {journal} {\bibinfo  {journal} {npj
  Quantum Information}\ }\textbf {\bibinfo {volume} {8}},\ \href
  {https://doi.org/10.1038/s41534-022-00550-2} {10.1038/s41534-022-00550-2}
  (\bibinfo {year} {2022}{\natexlab{a}})\BibitemShut {NoStop}%
\bibitem [{\citenamefont {{\.{Z}}yczkowski}\ \emph {et~al.}(2011)\citenamefont
  {{\.{Z}}yczkowski}, \citenamefont {Penson}, \citenamefont {Nechita},\ and\
  \citenamefont {Collins}}]{Zyczkowski2011random}%
  \BibitemOpen
  \bibfield  {author} {\bibinfo {author} {\bibfnamefont {K.}~\bibnamefont
  {{\.{Z}}yczkowski}}, \bibinfo {author} {\bibfnamefont {K.~A.}\ \bibnamefont
  {Penson}}, \bibinfo {author} {\bibfnamefont {I.}~\bibnamefont {Nechita}},\
  and\ \bibinfo {author} {\bibfnamefont {B.}~\bibnamefont {Collins}},\
  }\bibfield  {title} {\bibinfo {title} {Generating random density matrices},\
  }\href {https://doi.org/10.1063/1.3595693} {\bibfield  {journal} {\bibinfo
  {journal} {Journal of Mathematical Physics}\ }\textbf {\bibinfo {volume}
  {52}},\ \bibinfo {pages} {062201} (\bibinfo {year} {2011})}\BibitemShut
  {NoStop}%
\bibitem [{\citenamefont {Singh}\ and\ \citenamefont
  {Datta}(2022{\natexlab{b}})}]{Singh2021coherent}%
  \BibitemOpen
  \bibfield  {author} {\bibinfo {author} {\bibfnamefont {S.}~\bibnamefont
  {Singh}}\ and\ \bibinfo {author} {\bibfnamefont {N.}~\bibnamefont {Datta}},\
  }\bibfield  {title} {\bibinfo {title} {Coherent information of a quantum
  channel or its complement is generically positive},\ }\href
  {https://doi.org/10.22331/q-2022-08-11-775} {\bibfield  {journal} {\bibinfo
  {journal} {{Quantum}}\ }\textbf {\bibinfo {volume} {6}},\ \bibinfo {pages}
  {775} (\bibinfo {year} {2022}{\natexlab{b}})}\BibitemShut {NoStop}%
\bibitem [{\citenamefont {Horodecki}\ \emph {et~al.}(2000)\citenamefont
  {Horodecki}, \citenamefont {Lewenstein}, \citenamefont {Vidal},\ and\
  \citenamefont {Cirac}}]{Horodecki2000entanglement}%
  \BibitemOpen
  \bibfield  {author} {\bibinfo {author} {\bibfnamefont {P.}~\bibnamefont
  {Horodecki}}, \bibinfo {author} {\bibfnamefont {M.}~\bibnamefont
  {Lewenstein}}, \bibinfo {author} {\bibfnamefont {G.}~\bibnamefont {Vidal}},\
  and\ \bibinfo {author} {\bibfnamefont {I.}~\bibnamefont {Cirac}},\ }\bibfield
   {title} {\bibinfo {title} {Operational criterion and constructive checks for
  the separability of low-rank density matrices},\ }\href
  {https://doi.org/10.1103/PhysRevA.62.032310} {\bibfield  {journal} {\bibinfo
  {journal} {Phys. Rev. A}\ }\textbf {\bibinfo {volume} {62}},\ \bibinfo
  {pages} {032310} (\bibinfo {year} {2000})}\BibitemShut {NoStop}%
\bibitem [{\citenamefont {Chen}\ and\ \citenamefont
  {{D}okovi{\'{c}}}(2011)}]{Chen2011entanglement}%
  \BibitemOpen
  \bibfield  {author} {\bibinfo {author} {\bibfnamefont {L.}~\bibnamefont
  {Chen}}\ and\ \bibinfo {author} {\bibfnamefont {D.~{\v{Z}}.}\ \bibnamefont
  {{D}okovi{\'{c}}}},\ }\bibfield  {title} {\bibinfo {title} {Distillability
  and {PPT} entanglement of low-rank quantum states},\ }\href
  {https://doi.org/10.1088/1751-8113/44/28/285303} {\bibfield  {journal}
  {\bibinfo  {journal} {Journal of Physics A: Mathematical and Theoretical}\
  }\textbf {\bibinfo {volume} {44}},\ \bibinfo {pages} {285303} (\bibinfo
  {year} {2011})}\BibitemShut {NoStop}%
\bibitem [{\citenamefont {Hayashi}\ and\ \citenamefont
  {Chen}(2011)}]{Chen2011entanglement2}%
  \BibitemOpen
  \bibfield  {author} {\bibinfo {author} {\bibfnamefont {M.}~\bibnamefont
  {Hayashi}}\ and\ \bibinfo {author} {\bibfnamefont {L.}~\bibnamefont {Chen}},\
  }\bibfield  {title} {\bibinfo {title} {Weaker entanglement between two
  parties guarantees stronger entanglement with a third party},\ }\href
  {https://doi.org/10.1103/PhysRevA.84.012325} {\bibfield  {journal} {\bibinfo
  {journal} {Phys. Rev. A}\ }\textbf {\bibinfo {volume} {84}},\ \bibinfo
  {pages} {012325} (\bibinfo {year} {2011})}\BibitemShut {NoStop}%
\bibitem [{\citenamefont {Müller-Hermes}\ and\ \citenamefont
  {Singh}(2022)}]{Singh2022bippt}%
  \BibitemOpen
  \bibfield  {author} {\bibinfo {author} {\bibfnamefont {A.}~\bibnamefont
  {Müller-Hermes}}\ and\ \bibinfo {author} {\bibfnamefont {S.}~\bibnamefont
  {Singh}},\ }\bibfield  {title} {\bibinfo {title} {Bi-{PPT} channels are
  entanglement breaking},\ }\href@noop {} {\bibfield  {journal} {\bibinfo
  {journal} {arXiv:2204.01685}\ } (\bibinfo {year} {2022})}\BibitemShut
  {NoStop}%
\bibitem [{\citenamefont {Horodecki}\ \emph {et~al.}(2005)\citenamefont
  {Horodecki}, \citenamefont {Horodecki}, \citenamefont {Horodecki},\ and\
  \citenamefont {Oppenheim}}]{Horodecki2005private}%
  \BibitemOpen
  \bibfield  {author} {\bibinfo {author} {\bibfnamefont {K.}~\bibnamefont
  {Horodecki}}, \bibinfo {author} {\bibfnamefont {M.}~\bibnamefont
  {Horodecki}}, \bibinfo {author} {\bibfnamefont {P.}~\bibnamefont
  {Horodecki}},\ and\ \bibinfo {author} {\bibfnamefont {J.}~\bibnamefont
  {Oppenheim}},\ }\bibfield  {title} {\bibinfo {title} {Secure key from bound
  entanglement},\ }\href {https://doi.org/10.1103/PhysRevLett.94.160502}
  {\bibfield  {journal} {\bibinfo  {journal} {Phys. Rev. Lett.}\ }\textbf
  {\bibinfo {volume} {94}},\ \bibinfo {pages} {160502} (\bibinfo {year}
  {2005})}\BibitemShut {NoStop}%
\bibitem [{\citenamefont {Horodecki}\ \emph {et~al.}(2009)\citenamefont
  {Horodecki}, \citenamefont {Horodecki}, \citenamefont {Horodecki},\ and\
  \citenamefont {Oppenheim}}]{Horodecki2009private}%
  \BibitemOpen
  \bibfield  {author} {\bibinfo {author} {\bibfnamefont {K.}~\bibnamefont
  {Horodecki}}, \bibinfo {author} {\bibfnamefont {M.}~\bibnamefont
  {Horodecki}}, \bibinfo {author} {\bibfnamefont {P.}~\bibnamefont
  {Horodecki}},\ and\ \bibinfo {author} {\bibfnamefont {J.}~\bibnamefont
  {Oppenheim}},\ }\bibfield  {title} {\bibinfo {title} {General paradigm for
  distilling classical key from quantum states},\ }\href
  {https://doi.org/10.1109/tit.2008.2009798} {\bibfield  {journal} {\bibinfo
  {journal} {{IEEE} Transactions on Information Theory}\ }\textbf {\bibinfo
  {volume} {55}},\ \bibinfo {pages} {1898} (\bibinfo {year}
  {2009})}\BibitemShut {NoStop}%
\bibitem [{\citenamefont {Rains}(2001)}]{Rains2001bound}%
  \BibitemOpen
  \bibfield  {author} {\bibinfo {author} {\bibfnamefont {E.}~\bibnamefont
  {Rains}},\ }\bibfield  {title} {\bibinfo {title} {A semidefinite program for
  distillable entanglement},\ }\href {https://doi.org/10.1109/18.959270}
  {\bibfield  {journal} {\bibinfo  {journal} {{IEEE} Transactions on
  Information Theory}\ }\textbf {\bibinfo {volume} {47}},\ \bibinfo {pages}
  {2921} (\bibinfo {year} {2001})}\BibitemShut {NoStop}%
\bibitem [{\citenamefont {Christandl}\ and\ \citenamefont
  {Winter}(2004)}]{Christandl2004bound}%
  \BibitemOpen
  \bibfield  {author} {\bibinfo {author} {\bibfnamefont {M.}~\bibnamefont
  {Christandl}}\ and\ \bibinfo {author} {\bibfnamefont {A.}~\bibnamefont
  {Winter}},\ }\bibfield  {title} {\bibinfo {title}
  {{\textquotedblleft}squashed entanglement{\textquotedblright}: An additive
  entanglement measure},\ }\href {https://doi.org/10.1063/1.1643788} {\bibfield
   {journal} {\bibinfo  {journal} {Journal of Mathematical Physics}\ }\textbf
  {\bibinfo {volume} {45}},\ \bibinfo {pages} {829} (\bibinfo {year}
  {2004})}\BibitemShut {NoStop}%
\bibitem [{\citenamefont {Aubrun}(2012)}]{Auburn2012thresh}%
  \BibitemOpen
  \bibfield  {author} {\bibinfo {author} {\bibfnamefont {G.}~\bibnamefont
  {Aubrun}},\ }\bibfield  {title} {\bibinfo {title} {Partial transposition of
  random states and non-centered semicircular distributions},\ }\href
  {https://doi.org/10.1142/s2010326312500013} {\bibfield  {journal} {\bibinfo
  {journal} {Random Matrices: Theory and Applications}\ }\textbf {\bibinfo
  {volume} {01}},\ \bibinfo {pages} {1250001} (\bibinfo {year}
  {2012})}\BibitemShut {NoStop}%
\bibitem [{Note3()}]{Note3}%
  \BibitemOpen
  \bibinfo {note} {It is easy to show that $\Phi $ and $\Phi _c$ are
  complementary if and only if their Choi states are complementary}\BibitemShut
  {NoStop}%
\bibitem [{\citenamefont {Lloyd}(1997)}]{Lloyd1997capacity}%
  \BibitemOpen
  \bibfield  {author} {\bibinfo {author} {\bibfnamefont {S.}~\bibnamefont
  {Lloyd}},\ }\bibfield  {title} {\bibinfo {title} {Capacity of the noisy
  quantum channel},\ }\href {https://doi.org/10.1103/PhysRevA.55.1613}
  {\bibfield  {journal} {\bibinfo  {journal} {Phys. Rev. A}\ }\textbf {\bibinfo
  {volume} {55}},\ \bibinfo {pages} {1613} (\bibinfo {year}
  {1997})}\BibitemShut {NoStop}%
\bibitem [{\citenamefont {Devetak}(2005)}]{Devetak2005capacity}%
  \BibitemOpen
  \bibfield  {author} {\bibinfo {author} {\bibfnamefont {I.}~\bibnamefont
  {Devetak}},\ }\bibfield  {title} {\bibinfo {title} {The private classical
  capacity and quantum capacity of a quantum channel},\ }\href
  {https://doi.org/10.1109/TIT.2004.839515} {\bibfield  {journal} {\bibinfo
  {journal} {IEEE Transactions on Information Theory}\ }\textbf {\bibinfo
  {volume} {51}},\ \bibinfo {pages} {44} (\bibinfo {year} {2005})}\BibitemShut
  {NoStop}%
\bibitem [{\citenamefont {Nechita}(2007)}]{Nechita2007random}%
  \BibitemOpen
  \bibfield  {author} {\bibinfo {author} {\bibfnamefont {I.}~\bibnamefont
  {Nechita}},\ }\bibfield  {title} {\bibinfo {title} {Asymptotics of random
  density matrices},\ }\href {https://doi.org/10.1007/s00023-007-0345-5}
  {\bibfield  {journal} {\bibinfo  {journal} {Annales Henri Poincar{\'{e}}}\
  }\textbf {\bibinfo {volume} {8}},\ \bibinfo {pages} {1521} (\bibinfo {year}
  {2007})}\BibitemShut {NoStop}%
\end{thebibliography}%


\appendix

\section{Coherent information and the Proof of Eq.~\eqref{eq:cap-distill-bound2}}  \label{appen:capbound}

As in the case of entanglement distillation, the leakage of information by a channel $\Phi:\M{d_A}\to \M{d_B}$ to the `environment' is modelled via a third party called Charlie. Stinespring dilation theorem shows that there exists an isometry $V:\C{d_A}\to \C{d_B}\otimes \C{d_C}$ such that Bob receives the output $\Phi(\rho)=\operatorname{Tr}_C(V\rho V^\dagger)$ while Charlie receives the \emph{complementary} output $\Phi^c(\rho)=\operatorname{Tr}_B(V\rho V^\dagger)$ \footnote{It is easy to show that $\Phi$ and $\Phi_c$ are complementary if and only if their Choi states are complementary}. The \emph{coherent information} of $\Phi: \M{d_{A}}\to \M{d_B}$ is defined as 
\begin{align}
    \mathcal{Q}^{(1)}(\Phi) &:= \max_{\rho_{A}} \big( \, S[\Phi (\rho_{A})] - S[\Phi^c(\rho_{A})] \, \big) \label{eq:coherent}  \\
    &= \max_{\ket{\psi_{AA'}}} I^c_{\to}(\sigma_{A'B}),
\end{align}
where 
\begin{equation}\label{eq:sigma}
\sigma_{A'B} = (\Phi_{A\to B} \otimes \mathrm{id}_{A'\to A'}) (\ketbra{\psi_{AA'}})    
\end{equation}
and the optimization is over all pure states $\ket{\psi_{AA'}}\in \C{d_A}\otimes \C{d_{A'}}$ (or over all mixed states $\rho_{A}$ in Eq.~\eqref{eq:coherent}) with $d_A = d_{A'}$. It is known \cite{Lloyd1997capacity,Shor2000evidence, Devetak2005capacity} that the full 1-way capacity of $\Phi$ admits the following regularized expression:
\begin{equation}
    \mathcal{Q}_{\to}(\Phi) = \mathcal{Q}(\Phi) = \lim_{n\to \infty} \frac{\mathcal{Q}^{(1)}(\Phi^{\otimes n})}{n}. 
\end{equation}

\textit{Proof of Eq.~\eqref{eq:cap-distill-bound2}}. The crucial step in this proof is a probabilistic implementation of $\Phi$ via its Choi state with the help of teleportation. Assume that Alice and Bob share $N$ copies of the Choi state $J_{AB}(\Phi)$. Alice then locally prepares $N$ copies of a pure state $\ket{\psi_{AA'}}$ with $d_A=d_{A'}$. On each copy of $J_{AB}(\Phi)$ and $\ket{\psi_{AA'}}$, Alice performs a generalized measurement on the two $A$ systems with measurement operators $\{\Omega^+_{d_A}, \iden - \Omega^+_{d_A}\}$. It is easy to see that Alice obtains the outcome corresponding to the maximally entangled state $\Omega^+_{d_A}$ with probability $1/d^2_{A}$, in which case Alice and Bob finally share the state $\sigma_{A'B}$ as defined in Eq.~\eqref{eq:sigma}. They can then use the hashing protocol on these states to distill entanglement with rate $I_{\to}^c(\sigma_{A'B})/d^2_A$. Finally, since $\ket{\psi_{AA'}}$ can be any pure state, we get the required bound:
\begin{equation}
    D_{\to}(J_{AB}(\Phi)) \geq \frac{1}{d_A^2} \max_{\ket{\psi_{AA'}}} I^c_{\to} (\sigma_{A'B}) = \frac{1}{d_A^2} \mathcal{Q}^{(1)}(\Phi). \quad \qedsymbol
\end{equation}

\section{Proof of Theorem~\ref{theorem:main1}} \label{appen:th-main1}
Assume that Alice and Bob initially share $N$ copies of $\rho_{AB}$. On each such copy, Bob performs a generalized measurement with measurement operators $\{Y_B, \sqrt{\iden - Y_B^\dagger Y_B}\}$, where $Y_B=\sqrt{\lambda^B_{\min}} \rho_B^{-1/2}$ and the measurement succeeds if the outcome corresponding to $Y_B$ is observed. This happens with probability $p_{succ} = \Tr [(\iden_A\otimes Y_B) \rho_{AB} (\iden_A\otimes Y_B^\dagger)] = \lambda^B_{\min} r_B$, in which case the post-measurement state is given by
\begin{equation}
    \rho'_{AB} = \frac{1}{p_{succ}} (\iden_A\otimes Y_B) \rho_{AB} (\iden_A\otimes Y_B^\dagger).
\end{equation}
Bob conveys the measurement results to Alice via classical communication, so that they can discard the states for which the measurement failed. After this step, Alice and Bob share $\sim p_{succ}N$ copies of $\rho'_{AB}$. Note that $\rho'_B = \Pi_B/r_B$, where $\Pi_B$ is the orthogonal projection onto the support of $\rho_B$. Alice and Bob can now use the hashing protocol on the filtered states to distill entanglement with rate
\begin{align}
    p_{succ}I^c_{\to}(\rho'_{AB})  &= \lambda^B_{\min} r_B
    [S(\rho'_B) - S(\rho'_{AB})] \nonumber \\
    &\geq \lambda^B_{\min} r_B [\log r_B - \log r]. \quad \qedsymbol
\end{align}

\section{Proof of Eq.~\eqref{eq:rankbound}}\label{appen:eq-rankbound}
Clearly, the map 
\begin{align}
    \Lambda: \M{d_A} &\to \M{d_B} \nonumber \\
    \sigma_A &\mapsto \operatorname{Tr}_A [(\sigma_A\otimes \iden_B)\rho_{AB}]
\end{align}
is positive. For any (normalized) $\ket{\phi_A}$, since $\ketbra{\phi_A}\leq \iden_A$, we get $\rho^\phi_B = \Lambda(\ketbra{\phi_A}) \leq  \Lambda(\iden_A) = \rho_B$. Hence, $\rank\rho^\phi_B \leq \rank \rho_B$. Now, if $\sigma_{AC}$ is complementary to $\rho_{AB}$, we can apply the previous argument on $\sigma_{AC}$ to deduce that
$\rank\rho^\phi_B = \rank\sigma^\phi_C \leq \rank \sigma_C = \rank \rho_{AB}. \quad \qedsymbol$

\section{Proof of Theorem~\ref{theorem:main2}} \label{appen:th-main2}

Let us first state the result from \cite{Siddhu2021logsingularity, Singh2022detecting} which will be crucially used in the following proof. Note that for a quantum channel $\Phi:\M{d_A}\to \M{d_B}$, the following analogue of Eq.~\eqref{eq:rankbound} holds for all pure input states $\ketbra{\phi_A}$: 
\begin{equation*}
    \operatorname{rank}\Phi(\ketbra{\phi_A})\leq \min \{ \operatorname{rank}J_{AB}(\Phi), \operatorname{rank}J_B (\Phi) \}.
\end{equation*}

\begin{lemma} \cite[Theorem 1]{Siddhu2021logsingularity} \cite[Corollary II.8]{Singh2022detecting} \label{lemma:D}
    Let $\Phi:\M{d_A}\rightarrow \M{d_B}$ be a channel such that $\operatorname{rank}J_{AB}(\Phi)< \operatorname{rank}J_B (\Phi)$ and there exists a pure input state $\ketbra{\phi_A}$ with $\operatorname{rank}\Phi(\ketbra{\phi_A})= \min \{ \operatorname{rank}J_{AB}(\Phi), \operatorname{rank}J_B (\Phi) \} = \operatorname{rank}J_{AB}(\Phi)$. Then, $\mathcal{Q}^{(1)}(\Phi)>0$.
\end{lemma}

We are now ready to prove Theorem~\ref{theorem:main2}. Assume that Alice and Bob initially share $N$ copies of $\rho_{AB}$. On each such copy, Alice performs a generalized measurement with measurement operators $\{Y_A, \sqrt{\iden - Y_A^\dagger Y_A}\}$, where $Y_A=\sqrt{\lambda^A_{\min}} \rho_A^{-1/2}$ and $\lambda^A_{min}$ is the minimum positive eigenvalue of $\rho_A$. The measurement succeeds if the outcome corresponding to $Y_A$ is observed (this happens with probability $p_{succ} = \Tr [(Y_A\otimes \iden_B) \rho_{AB} (Y^\dagger_A\otimes \iden_B)] > 0 $), in which case the post-measurement state is
\begin{equation}
    \rho'_{AB} = \frac{1}{p_{succ}} (Y_A \otimes \iden_B) \rho_{AB} (Y_A^\dagger \otimes \iden_B).
\end{equation}
Note that $\rho'_A = \Pi_A/r_A$, where $\Pi_A$ projects orthogonally onto the support of $\rho_A$ and $r_A=\operatorname{rank}\rho_A$. By using the teleportation trick explained in the proof of Eq.~\eqref{eq:cap-distill-bound2}, Alice and Bob can use $\rho'_{AB}$ to probabilistically implement the channel $\Phi_{\rho'}: \M{r_A}\to \M{d_B}$ defined by $J_{AB}(\Phi_{\rho'})=\rho'_{AB}$. Moreover, $\rank J_{AB}(\Phi_{\rho'})= r < r_B = \rank J_B(\Phi_{\rho'})$ (recall that if $A,B\in \M{d}$ are positive semi-definite with $\operatorname{supp}B\subseteq \operatorname{supp}A$, then $\rank ABA^{\dagger} = \operatorname{rank}B$), and 
\begin{equation}
    \Phi_{\rho'}(X) = \Phi_{\rho} (Y_A^\top X Y_A^\top),
\end{equation}
where $\Phi_\rho$ is a completely positive (not necessarily trace-preserving) map defined by $J_{AB}(\Phi_{\rho})=\rho_{AB}$. Hence, the constraint $\operatorname{rank}\rho^\phi_B = r$ translates into the following:
\begin{align*}
    \rank\Phi_{\rho'} (Z_A \ketbra{\phi_A} Z_A)&= \rank \Phi_{\rho}(\ketbra{\phi_A}) \\ 
    & = \rank \rho^{\phi}_B = \min\{r, r_B\}=r,
\end{align*}
where $Z_A =(Y^\top_A)^{-1}.$
Lemma~\ref{lemma:D} then shows that $\Phi_{\rho'}$ can transmit quantum information at a non-zero rate: $\mathcal{Q}^{(1)}(\Phi_{\rho'})>0$. Hence, we have
\begin{equation}
    D_{\to}(\rho_{AB})\geq p_{succ}D_{\to}(\rho'_{AB})\geq \frac{p_{succ}}{r_A^2}\mathcal{Q}^{(1)}(\Phi_{\rho'}) > 0,
\end{equation}
where the penultimate inequality follows from Eq.~\eqref{eq:cap-distill-bound2}. $\qedsymbol$

\section{Proof of Theorem~\ref{theorem:main3}}\label{appen:th-main3}
It is easy to prove that for $(\M{d_A}\otimes \M{d_B})\ni \rho_{AB}\sim \mu_{d_C}$, the following statements are almost surely true:
\begin{align}
    \operatorname{rank}\rho_{AB} = \min\{d_C, d_A d_B \}, \nonumber \\
    \operatorname{rank} \rho_B = \min \{d_B, d_A d_C \}.
\end{align}
Hence, for $\rho_{AB}\sim \mu_{d_C}$ and $d_C<d_B$, $\operatorname{rank}\rho_{AB}<\operatorname{rank}\rho_B$ almost surely. Moreover, a random pure state $\ket{\psi_{ABC}}$ is nothing but a (suitably normalized) $d_B d_C \times d_A$ random matrix with i.i.d. standard complex Gaussian entries (see \cite{Nechita2007random, Zyczkowski2011random}). Hence, each of its columns $\ket{V^i_{BC}}$ for $i=1,2,\ldots ,d_A$, considered as vectors in $\C{d_B}\otimes \C{d_C}$, have full Schmidt rank =$\min \{d_B, d_C \}=d_C$ almost surely. In other words, for any basis vector $\ket{i_A}\in \C{d_A}$, we have 
\begin{equation}
    \rank \rho^i_B = \rank \operatorname{Tr}_C\ketbra{V^i_{BC}} = d_C
\end{equation}
almost surely (recall that $\rho^i_B$ was defined below Eq.~\eqref{eq:rankbound}). A simple application of Theorem~\ref{theorem:main2} then gives us the desired result. $\quad \qedsymbol$


\end{document}